\newcommand{\R}{\mathbb R}
\newcommand{\N}{\mathbb N}
\newcommand{\K}{\mathbb{K}}
\newcommand{\eps}{{\epsilon}}
\newcommand{\beq}{\begin{equation}}
\newcommand{\eeq}{\end{equation}}
\newcommand{\beqs}{\begin{eqnarray}}
\newcommand{\eeqs}{\end{eqnarray}}
\newcommand{\beql}{\begin{equation} \label}
\newcommand{\half}{\frac{1}{2}}
\newtheorem{theorem}{Theorem}[section]
\newtheorem{lemma}{Lemma}[section]
\newtheorem{proposition}{Proposition}[section]
\newtheorem{remark}{Remark}[section]
\newcommand{\curl}{\mathop{\rm curl}\nolimits}
\newcommand{\p}{\partial}
\newcommand{\mca}{\mathcal{A}}
\newcommand{\dee}{\mathcal{D}}
\newcommand{\scl}{\mathcal{L}}
\newcommand{\la}{{\langle}}
\newcommand{\ra}{{\rangle}}
\newcommand{\Tr}{{\rm {Tr}}}
\newcommand{\gog}{{\mathfrak{G}}}
\newtheorem{definition}{Definition}[section]
\date{}
\begin{document}
\title{Variational Dual Solutions of Chern-Simons Theory}

\author{Amit Acharya\thanks{Department of Civil \& Environmental Engineering, and Center for Nonlinear Analysis, Carnegie Mellon University, Pittsburgh, PA 15213, email: acharyaamit@cmu.edu.} $\qquad$ Janusz Ginster\thanks{Weierstrass Institute, Mohrenstrasse 39, 10117 Berlin, Germany, email: ginster@wias-berlin.de.} $\qquad$ Ambar N. Sengupta\thanks{Department of Mathematics, University of Connecticut, Storrs, CT 06269, email: ambarnsg@gmail.com.}}

\maketitle
\begin{abstract}
\noindent A scheme for generating weakly lower semi-continuous action functionals corresponding to the Euler-Lagrange equations of Chern-Simons theory is described.  Coercivity is deduced for such a functional in appropriate function spaces to prove the existence of a minimizer, which constitutes a solution to the Euler-Lagrange equations of Chern-Simons theory in a relaxed sense. A geometric analysis is also made, especially for the gauge group $SU(2)$, relating connection forms on the bundle to corresponding forms in the dual scheme.

\end{abstract}

\section{Introduction}
The goal of this paper is to define a dual variational principle corresponding to the Euler-Lagrange equations of the Chern-Simons functional and to prove existence of minimizers of the dual functional. The functional has the property that its critical points formally define solutions to the Euler-Lagrange equations of Chern-Simons theory \cite{ChernSim1974}. Our approach is an application of a recently developed scheme for generating variational principles corresponding to a given set of equations and side conditions \cite{Acharya:2022:VPN,Acharya:2023:DVP,AcharyaGinsterSingh,Acharya:2024:HCC,Acharya:2024:APD,Acharya:2024:VDS,Kouskiya:2024:HCH,KA2}; the equations can represent dissipative or conservative systems. The scheme treats the primal PDE under consideration as constraints and invokes a more-or-less arbitrarily designable strictly convex, auxiliary potential to be optimized. Then, a \emph{convex} dual variational principle
for the Lagrange multiplier (dual) fields emerges involving a dual-to-primal (DtP) mapping
(an adapted change of variables), with the special property that its Euler-Lagrange equations
are exactly the primal PDE system, interpreted as equations governing the dual fields using the DtP
mapping. These ideas are closely related to those of Brenier \cite{Brenier:2010:HCN,Brenier:2018:IVP} on `hidden convexity' of nonlinear PDE.

The Chern-Simons functional is not bounded above or below and hence neither minimizers nor maximizers exist (see the discussion in Section \ref{sec:var_anal}). This renders the Direct Method of the Calculus of Variations not feasible to prove existence of solutions to the Euler-Lagrange equations. Instead, the proposed method in our approach associates the original problem to the Euler-Lagrange equations of a \textit{dual} functional. We prove that for specific choices of auxiliary potentials this dual functional turns out to be bounded below, lower-semicontinuous and coercive. In particular, the Direct Method of the Calculus of Variations can be invoked to show existence of minimizers of the dual functional. Assuming that the dual functional is regular enough and the DtP mapping is well-defined, these minimizers correspond to extremals of the Chern-Simons theory. This is why we refer to them as \emph{variational dual solutions} of the theory.

An outline of the paper is as follows: In Sec.~\ref{sec:geom_back} we  present a largely self-contained account of the Chern-Simons functional and the associated variational equations (\ref{E:FA0}), describing flat connections, which we spell out more explicitly (\ref{E:flatSU2}) for the case where the gauge group is $SU(2)$; following this,  in section \ref{ss:gdds}, we work out the geometry of the dual scheme for the  variational equations. In Theorem \ref{T:SHcrit}, we show that the critical points of the dual functional $S_H$ correspond to flat connections with the prescribed boundary values. The explicit expression for the dual functional 
for a quadratic auxiliary potential is developed in Sec.~\ref{sec:quad_H} and the consistency of the scheme is shown, i.e., that there exists at least one dual functional whose critical point delivers a solution to the Euler-Lagrange equations of Chern-Simons theory satisfying the prescribed boundary conditions, for every solution of the theory. In Sec.~\ref{sec:var_anal} we undertake a variational analysis of dual functionals corresponding to Chern-Simons theory, and prove the existence of variational dual solutions of the same (Theorem \ref{thm: existence}).

\section{Geometric background}\label{sec:geom_back}

One of the foundational results in geometry is the Gauss-Bonnet theorem, which says that the integral of the curvature over a closed surface is $2\pi$ times the Euler characteristic, a topological invariant, of the surface. A far-reaching generalization was proved by Chern \cite{Chern1944} a hundred years later, showing that the integral of the Pfaffian (a polynomial in the matrix entries) of the curvature 2-form over  a closed orientable $(2n)$-dimensional manifold equals $(2\pi)^n$ times the Euler characteristic of the manifold. In proving the Gauss-Bonnet theorem and Chern's theorem, it is useful to construct a potential $\Phi$ for the Pfaffian of the curvature. These ideas generalize to the setting of connections on principal bundles: the Pfaffian of the curvature $2$-form, corresponding to a connection $1$-form $\omega$ on a principal $G$-bundle $\pi:P\to M$, is a closed  $(2n)$-form that descends to a closed $(2n)$-form ${\rm Pf}(\Omega)$ on $M$.   The $(2n)$-form on $P$ has a global potential $\Phi$ that descends locally, but, in general, not globally, to a $(2n-1)$-form on the base manifold $M$ (this is a special case of (\cite[Proposition 3.2]{ChernSim1974}). In the case of a 4-dimensional manifold, this potential is the Chern-Simons form, with a suitable normalization constant. If a closed orientable manifold were to be split into two 4-dimensional contractible submanifolds with a common 3-dimensional boundary, then the integral of ${\rm Pf}(\Omega)$ over each submanifold equals, by Stokes' theorem, the integral of the local potential over the boundary $3$-manifold. Our focus is on the Chern-Simons $3$-form for the group $SU(2)$ over an open subset of $\R^3$.

\subsection{The Chern-Simons form} We work with a smooth $1$-form $A$, defined on a compact oriented $3$-manifold $M$, possibly with boundary, with values in a Lie algebra $\gog$, which is equipped with a symmetric, bilinear pairing $\la\cdot,\cdot\ra$ on $\gog\times\gog$, satisfying the ad-invariance condition (\ref{E:uvw{}}). We define the {
\em Chern-Simons} $3    $-form to be:
\begin{equation}\label{E:csA}
    {\rm cs}(A)= \left\langle A\wedge dA+\frac{1}{3}A\wedge [A\wedge A]\right\rangle,
\end{equation}
with $\la\cdot\ra$ being as defined in (\ref{E:{AdotB]}}). 
We take the Chern-Simons action to be the integral of this form over  $\R^3$ :
\begin{equation}\label{E:CSAOmega}
{\rm CS}(A)=\int_{M} {\rm cs}(A).
\end{equation}
(As remarked earlier,  a certain normalization constant multiplier is needed in other contexts.)

An example of interest is when 
$$M=\Sigma\times [0,1],$$
where $\Sigma$ is a compact, oriented surface. We can consider this as embedded as a submanifold-with-boundary in $\R^3$, by considering the surface $\Sigma$ to be sitting inside $\R^3$. A different example, is given by simply a closed ball in $\R^3$.

\subsubsection{The case of the gauge group $SU(2)$}
Let us specialize to the case of $SU(2)$, with Lie algebra $su(2)$. Notation and basic facts about $SU(2)$ and $su(2)$ are summarized in section \ref{ss:su2}. We consider    $su(2)$-valued smooth $1$-forms $A$ on a $3$-manifold, and we work locally with a coordinate chart $(x^1, x^2, x^3)$. (The form $A$ is  to be thought of as a connection form on a bundle pulled down by means of a section.) The {Chern-Simons} $3$-form for    $A$ is then
\begin{equation}\label{E:csAmatr}
    {\rm cs}(A)=-  {\rm Tr}\left(A\wedge dA+\frac{2}{3}A\wedge A\wedge A\right).
\end{equation}
We work with a fixed orthonormal basis  $\{E_1, E_2, E_3\}$ of $su(2)$, and the standard basis $1$-forms $\{dx^1, dx^2, dx^3\}$ for the space of ordinary real-valued $1$-forms. Then we can write an $su(2)$-valued $1$-form $A$ as
\begin{equation}
    A= A_p \otimes dx^p=\sum_{J=1}^3A_{Jp}E_J {\otimes} dx^p,
\end{equation}
where each 
$$A_p=\sum_{J=1}^3A_{Jp}E_J $$
is an $su(2)$-valued function on $\Omega$, with each  $A_{JP}$ being a real-valued function on $\Omega$. Then the wedge product $A\wedge dA$ is 
\begin{equation}
    A\wedge dA =\sum_{p,q,r} A_p\partial_q A_r dx^p\wedge dx^q\wedge dx^r
\end{equation}
and
\begin{equation}
    A\wedge A\wedge A =\sum_{p,q,r} A_pA_qA_r dx^p\wedge dx^q\wedge dx^r.
\end{equation}
Using skew-symmetry of the wedge product it can be checked that these are both $su(2)$-valued. 
 Then 
\begin{equation}\label{E:CSformcoords}
    \begin{split}
     {\rm cs}(A) 
      &= -{\rm Tr} \left( A\wedge  dA+\frac{2}{3} A\wedge A\wedge A \right) \\
      &= -{\rm Tr} \left(A_p\partial_qA_r+ \frac{2}{3}A_pA_qA_r\right)\,dx^p\wedge dx^q\wedge dx^r\\
       &= -{\rm Tr} \left(A_p\partial_qA_r+ \frac{1}{3}A_p[A_q,A_r] \right)\,dx^p\wedge dx^q\wedge dx^r\\
       &= -{\rm Tr} \left(A_{Jp}\partial_qA_{Kr}(E_JE_K)+ \frac{1}{3}A_{Jp}A_{Kq}A_{Lr}E_J[E_K, E_L] \right)\,dx^p\wedge dx^q\wedge dx^r\\
       &=   2 \varepsilon_{pqr} \left( A_{Jp} \partial_q A_{Jr}  + \frac{2}{3} \varepsilon_{JKL} A_{Jp}A_{Kq} A_{Lr} \right) \,dx^1\wedge dx^2\wedge dx^3,
    \end{split}
\end{equation}
where we have used the commutation relations (\ref{E:EJKLcomm}), the orthonormality (\ref{E:TrEJK}), and (\ref{E:TrEJKL}).

\subsection{The Euler-Lagrange equations for the  Chern-Simons action}

To compute the directional derivative of the CS action, for  any   smooth $1$-form $v$ on $M$, a compact oriented $3$-manifold with boundary, we have
\begin{equation}
    \begin{split}
       & \frac{d}{dt}\Big|_{t=0} {\rm CS}(A+tv)\\
       &= \int_{M}\left\langle v\wedge dA+ A\wedge dv   +\frac{1}{3}\left(v\wedge [A\wedge A]+ A\wedge [v\wedge A] +A\wedge [A\wedge v]\right)\right\rangle\\
        &= \int_{M}\left\langle v\wedge dA +dv\wedge A  +\frac{1}{3}\left(v\wedge [A\wedge A]+ A\wedge [v\wedge A] +A\wedge [A\wedge v]\right)\right\rangle \\
        &\hskip 1in \hbox{(using (\ref{E:AwedgeBsym}))}\\
        &= \int_{M}\left\langle d(v\wedge  A)+2v\wedge dA  +\frac{1}{3}\left(v\wedge [A\wedge A]+ v\wedge [A\wedge A] +v\wedge [A\wedge A]\right)\right\rangle \\
        &\hskip 1in \hbox{(using  (\ref{E:ABwC}))}\\
        &=\int_{\partial M}\la v\wedge A\ra \,+ 2\int_M\left\langle v\wedge\left(dA +\frac{1}{2}[A\wedge A]\right)\right\rangle,
    \end{split}
\end{equation}
where $F^A$ is the curvature of $A$:
\begin{equation}
    F^A:=dA+\frac{1}{2}[A\wedge A].
\end{equation}
Thus,
\begin{equation}\label{E:dCSAv}
d{\rm CS}|_Av  = \int_{\partial M}\la v\wedge A\ra \,+ 2\int_M\left\langle v\wedge F^A\right\rangle.
\end{equation}
Taking $v|_{\partial M}=0$, which means that we fix the values of $A$ on the boundary, the condition that $A$ is a critical point for the Chern-Simons integral is that its curvature $F^A$ be $0$:
\begin{equation}\label{E:FA0}
    \hbox{Euler-Lagrange equations for ${\rm CS}$}:\quad F^A=0.
\end{equation}
Consider now the case  $M=\Sigma\times [0,1]$, where $\Sigma$ is a compact, oriented, surface. We denote by $t$ the coordinate along $[0,1]$. We take $A$ to be of the form  $A_{\Sigma\times\{t\}}$, with the coefficient of $dt$ being $0$; then $A$ being flat means that $A_{\Sigma\times\{t\}}$ is a flat connection over the surface $\Sigma\times\{t\}$ and is the same connection, viewed over $\Sigma$, for all $t$. In other words, $A$ corresponds to a path of flat connections over $\Sigma$.

\subsubsection{The Euler-Lagrange equations for the $SU(2)$ case}
Working now with $A$ being an $su(2)$-valued $1$-form on $\Omega\subset\R^3$, we have the variation computation:
\begin{subequations}\label{eq:CS_var}
\allowdisplaybreaks
\begin{align}
    &\delta {\rm CS}(A) \notag\\
    &= 2\epsilon_{pqr}\int_{\Omega} \left[ (\partial_qA_{Jr})\delta A_{Jp} - (\partial_qA_{Jp}) \delta A_{Jr} \right. \notag\\
    &\left.\hskip 1in +\frac{2}{3}\epsilon_{JKL}\left\{(\delta A_{Jp})A_{Kq}A_{Lr} +A_{Jp}(\delta A_{Kq})A_{Lr}+ A_{Jp}A_{Kq}{\delta}A_{Lr}\right\}\right] \,d{\rm vol} \notag\\
    &= 2\int_{\Omega} \left[\epsilon_{pqr}\partial_qA_{Jr} -\epsilon_{rqp}\partial_qA_{Jr}\right. \notag\\
    &\left. \hskip 1in+\frac{2}{3}\left\{\epsilon_{pqr}\epsilon_{JKL}A_{Kq}A_{Lr}+\epsilon_{qpr}\epsilon_{KJL}A_{Kq}A_{Lr}  +\epsilon_{rqp}\epsilon_{LKJ}A_{Lr}A_{Kq}\right\}  \right]\delta A_{Jp}\,d{\rm vol} \notag\\
    &= 2\int_{\Omega}\left[\epsilon_{pqr}\left(\partial_qA_{Jr}+\partial_qA_{Jr}\right) +\frac{2}{3}\epsilon_{pqr}\epsilon_{JKL} \left\{ A_{Kq}A_{Lr}+A_{Kq}A_{Lr}+A_{Kq}A_{Lr}\right\} \right]\delta A_{Jp}\, d{\rm vol} \tag{\ref{eq:CS_var}}\\
    &=4\int_{\Omega}\epsilon_{pqr}   \left[\partial_qA_{Jr} +\epsilon_{JKL}A_{Kq}A_{Lr}\right]        \delta A_{Jp}\,d{\rm vol}. \notag
\end{align}
\end{subequations}
For this variation to be zero for all variations in $A$, we have the condition
\begin{equation}\label{E:ELCS1}
   \epsilon_{pqr}\left[\partial_qA_{Jr} + \epsilon_{JKL}A_{Kq}A_{Lr}\right]=0.
\end{equation}
For each fixed $p$, this gives
\begin{equation}\label{E:flatSU2}
 \partial_qA_{Jr}-\partial_rA_{Jq}+\epsilon_{JKL}\left\{A_{Kq}A_{Lr}-A_{Kr}A_{Lq}\right\}=0.
\end{equation}
This is again the flatness condition (\ref{E:FA0}), just written in local coordinates.

\subsection{Geometric description of the dual scheme for Chern-Simons}\label{ss:gdds}

As always, we   work with an open subset $\Omega$ of $\R^3$, such that $\Omega\cup\partial\Omega$ is a compact, oriented manifold, in the natural structure inherited from $\R^3$.  Moreover, we also work with a Lie algebra $\gog$, equipped with an ad-invariant inner-product $\la\cdot,\cdot\ra$, and suitably smooth $\gog$-valued $1$-forms. 

We have seen in (\ref{E:FA0}) that the extrema of the Chern-Simons functional are given by flat connections. In this section we carry out, in geometric language, the  {\em dual method} \cite{Acharya:2022:VPN,Acharya:2024:HCC}: this (i) writes a connection $A$ (the primal variable here) as $A^{(H)}(\lambda)$, in terms of a dual variable $\lambda$, and (ii)  designs a variational principle   {involving} a functional $S_H(\lambda)$ such that the Euler-Lagrange equations for $S_H(\lambda)$ automatically produce both the flatness condition (\ref{E:FA0}) for $A$ and a given boundary condition for $A$ on $\partial\Omega$.   We accomplish this in Theorem \ref{T:SHcrit}.

\subsubsection{Lie-algebra-valued forms} At each $p\in\Omega$, we have the space
$${\gog\otimes T_p^*\Omega}$$
of linear maps from $  T_p\Omega $ with values in $\gog$. The union
$$ \gog \otimes (T^*\Omega) := \cup_{p\in\Omega}(\gog\otimes T_p^*\Omega) $$
is a bundle over $\Omega$, with projection map
\begin{equation}\label{E:TpOmg}
    \gog \otimes (T^*\Omega) \to \Omega : A_p\mapsto p,
\end{equation}
for $A_p\in {\gog\otimes T_p^*\Omega}$ and all $p\in\Omega$.   A section of the bundle (\ref{E:TpOmg}) is a $1$-form with values in $\gog$. 

\subsubsection{The auxiliary function $H$} 
We will work with a smooth  function $H$ defined on the bundle space  $ \gog\otimes (T^*\Omega)$, with the restriction of $H$ to the space $\gog\otimes T_p^*\Omega$ denoted by $H_p$. Then for any $A_p\in \gog\otimes (T_p^*\Omega)$, the map
\begin{equation}
  dH_p\Big|_{A_p}:  \gog\otimes T_p^*\Omega  \to \R ; \qquad  v\mapsto dH_p\Big|_{A_p}v=\frac{d}{dt}\Big|_{t=0}H_p(A_p+tv)
\end{equation}
is linear in $v$ and so there exists a unique element
$$\nabla H(A_p)\in\gog \otimes T_p^*\Omega$$
such that
\begin{equation}
      dH_p\Big|_{A_p}v  = \la \nabla H(A_p), v\ra\qquad \hbox{for all $v\in \gog\otimes T^*_p\Omega$.}
\end{equation}
Moreover, $\nabla H$, viewed as a $\gog$-valued function on $\gog\otimes (T^*\Omega) $, is smooth. Especially when working in coordinates, it will be simpler if we write $\nabla H(A)$ instead of $\nabla H(A_p)$.

To express the gradient in coordinates, let $E_1,\ldots, E_{\dim\gog}$ be an orthonormal basis of $\gog$; then $\{E_Z\otimes dx^q\}$ is an orthonormal  basis of $\gog\otimes T_p^*\Omega$, and
\begin{equation}
  \partial_{Zr}H(A):=dH_p\Big|_{A} (E_Z\otimes dx^q)= \la\nabla H(A), E_Z\otimes dx^q\ra.
\end{equation}
So we can write the gradient as
\begin{equation}\label{E:gradHAcrd}
   \nabla H(A) = \partial_{Zr}H(A)E_Z\otimes dx^r.
\end{equation}

  An example that we will explore further later is with $H(A_p)$ given by
$$\frac{1}{2}|\!|A_p-\bar{A}_p|\!|^2=\frac{1}{2} \sum_{j=1}^3\la A_p(e_j)-{\bar A}_p(e_j), A_p(e_j)-{\bar A}_p(e_j)\ra,$$
where $e_1, e_2, e_3$ are orthonormal vectors in $T_p\Omega$, $\la\cdot,\cdot\ra$ is the inner-product on $\gog$, and $\bar{A}_p$ is some chosen fixed element of ${\gog\otimes T_p^*\Omega}$; for this function, the gradient is given by
$$\nabla \frac{1}{2}|\!|A-\bar{A}|\!|^2= \sum_{j=1}^3 \left(A(e_j)-{\bar A}(e_j)\right)e^j,$$
where $e^1, e^2, e^3$ form the basis of $T^*_p\Omega$ dual to $e_1, e_2, e_3$.

\subsubsection{The connection space $\mca_k$}

Let
$$\mca_k$$
  be the set of all $C^k$ sections, with $k\geq 1$,  of the bundle $  \gog\otimes (T^*\Omega) \to \Omega$. Thus, $\mca_k$ is the space of $\gog$-valued $C^k$-smooth connections over $\Omega$.
Although in standard differential geometry, a connection is always $C^\infty$, we can work, for our purposes, with $C^k$ forms, where $k\geq 1$.

\subsubsection{The pre-dual functional  ${\hat S}_H$}
We introduce now the {\em pre-dual} functional 
\begin{equation}
    {\hat S}_H: \mca_k\times\mca_k\to\R,
\end{equation}
given by
\begin{equation}\label{E:predual}
\begin{split}
    {\hat S}_H[A, \lambda] &=  -\int_{\partial \Omega}\la\lambda\wedge A^{(b)} \ra+\int_{\Omega} \left(\left\langle d\lambda\wedge A+\half \lambda\wedge [A\wedge A] \right\rangle  +H(A)\,d{\rm vol}\right).
    \end{split}
\end{equation}
Here $A^{(b)}$ is a fixed $\gog$-valued $C^k$ $1$-form on $\partial\Omega$, which we are not displaying   in  ${\hat S}_H[A, \lambda]$ for the sake of  notational simplicity. (If we were to work at the level of a bundle $P$ over $\Omega$, then a connection would be a $\gog$-valued $1$-form on $P$; however, we keep things simple by using a global section of $P$ over $\Omega$, which makes it possible to bring all relevant forms and calculations down to $\Omega$.)

\begin{proposition}\label{P:derivSH} 
The functional ${\hat S}_H$ has directional derivatives everywhere. If $A, \lambda, v$ are    $\gog$-valued $C^k$ $1$-forms on $\Omega$, with $k\geq 1$, we have
\begin{equation}\label{E:dShatAdt2}
\begin{split}
       \partial_1{\hat S}_H(A,\lambda)|v &\stackrel{\rm def}{=}\frac{d}{dt}\Big|_{t=0}{\hat S}_H[A+tv,\lambda]\\ &= \int_{\Omega} \la \ast (d^A\lambda) +\nabla H(A), v\ra \,d{\rm vol},
    \end{split}
\end{equation}
where 
  $d^A\lambda$ is the covariant derivative:
\begin{equation}\label{E:dAlam}
    d^A{\lambda} =d\lambda+[A\wedge \lambda],
\end{equation}
and
$\ast$ is the Hodge dual operation and the gradient $\nabla H(A)$ is the $\gog$-valued $1$-form on $\Omega$  for which
\begin{equation}
      H'(A)v  = \la \nabla H(A), v\ra
\end{equation}
holds for all $v\in\mca_k$. Moreover, 
\begin{equation}
\begin{split}
  \partial_2{\hat S}_H(A,\lambda)|\beta &\stackrel{\rm def}{=}  \frac{d}{ds}\Big|_{s=0}{\hat S}_H(A, \lambda+s\beta)\\
  &= \int_{\partial\Omega}\la \beta\wedge (A-A^{(b)}\ra +\int_{\Omega}\la \beta\wedge F^A\ra
  \end{split}
\end{equation}
for all $\beta\in\mca_k$.
    
\end{proposition}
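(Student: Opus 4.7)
The plan is to differentiate $\hat S_H$ term by term in each slot, use the ad-invariance and graded-symmetry identities (\ref{E:uvw{}}), (\ref{E:AwedgeBsym}), (\ref{E:ABwC}) to rewrite each wedge integrand in the form ``test form $\wedge$ known $\gog$-valued form'', then apply Stokes' theorem where appropriate and convert $\gog$-valued $3$-forms on $\Omega$ to volume-form integrands via the pointwise Hodge dual. Because $\hat S_H$ is a polynomial of degree at most three in $A$ and of degree one in $\lambda$, existence of the directional derivatives in both slots is automatic for $C^k$ fields with $k\geq 1$; the work is entirely in identifying the derivatives with the claimed expressions.

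For the first slot, I would expand $\hat S_H[A+tv,\lambda]$ in powers of $t$. The boundary term is independent of $A$, so the coefficient of $t$ comes only from the interior and reads
\[
\int_\Omega \la d\lambda\wedge v\ra \;+\; \tfrac12\int_\Omega\la\lambda\wedge\bigl([v\wedge A]+[A\wedge v]\bigr)\ra \;+\; \int_\Omega\la\nabla H(A),v\ra\,d{\rm vol}.
\]
The key identity, in which essentially all of the subtlety of the proof is concentrated, is that the middle integral equals $\int_\Omega\la [A\wedge\lambda]\wedge v\ra$. This will follow from ad-invariance (\ref{E:uvw{}}) applied pointwise on the bracket, combined with the graded commutation rules (\ref{E:AwedgeBsym})--(\ref{E:ABwC}) that move $v$ past $A$ and $\lambda$ with the correct signs. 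Once this rewriting is done, the first two integrals combine as $\int_\Omega\la d^A\lambda\wedge v\ra$ with $d^A\lambda$ as in (\ref{E:dAlam}). Since $d^A\lambda$ is a $\gog$-valued $2$-form on the $3$-manifold $\Omega$ and $v$ a $\gog$-valued $1$-form, pointwise Hodge duality gives $\la d^A\lambda\wedge v\ra = \la \ast(d^A\lambda),v\ra\,d{\rm vol}$, and (\ref{E:dShatAdt2}) follows.

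For the second slot the calculation is more transparent: differentiating $\hat S_H[A,\lambda+s\beta]$ in $s$ at $s=0$ yields
\[
-\int_{\partial\Omega}\la\beta\wedge A^{(b)}\ra \;+\; \int_\Omega\la d\beta\wedge A\ra \;+\; \tfrac12\int_\Omega\la\beta\wedge[A\wedge A]\ra.
\]
I would then apply the graded Leibniz rule $d\la\beta\wedge A\ra = \la d\beta\wedge A\ra - \la\beta\wedge dA\ra$ (the minus sign reflecting that $\beta$ is a $1$-form) and invoke Stokes' theorem on $\Omega$ to rewrite $\int_\Omega\la d\beta\wedge A\ra = \int_{\partial\Omega}\la\beta\wedge A\ra + \int_\Omega\la\beta\wedge dA\ra$. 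The two boundary pieces combine to $\int_{\partial\Omega}\la\beta\wedge(A-A^{(b)})\ra$, and the bulk pieces aggregate as $\int_\Omega\la\beta\wedge(dA+\tfrac12[A\wedge A])\ra = \int_\Omega\la\beta\wedge F^A\ra$, which is the claimed formula.

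The only non-routine step is the cubic-term rearrangement in the first variation; the rest is bookkeeping with signs, Stokes' theorem, and the pointwise identification $\la\omega\wedge v\ra = \la\ast\omega,v\ra\,d{\rm vol}$ for a $\gog$-valued $2$-form $\omega$ and $\gog$-valued $1$-form $v$. I expect no difficulty provided the symmetry identities (\ref{E:uvw{}})--(\ref{E:ABwC}) are applied carefully.
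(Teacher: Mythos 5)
Your proposal is correct and follows essentially the same route as the paper's proof: the same term-by-term expansion, the same use of (\ref{E:AwedgeBsym}), (\ref{E:AbrackBskew}) and (\ref{E:ABwC}) to convert the cubic term and arrive at $\la d^A\lambda\wedge v\ra$, and the same pointwise Hodge-dual identification $\la d^A\lambda\wedge v\ra = \la \ast(d^A\lambda), v\ra\,d{\rm vol}$ yielding (\ref{E:dShatAdt2}). The only difference is that you spell out the graded Leibniz rule and Stokes' theorem for the $\partial_2$ derivative, a step the paper states without detail; this is an elaboration of, not a departure from, the paper's argument.
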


\begin{proof} By straightforward computation, we have
\begin{equation}\label{E:dShatAdt}
    \begin{split}
       \frac{d}{dt}\Big|_{t=0}{\hat S}_H[A+tv,\lambda] &= \int_{\Omega} \left( \left\langle d\lambda \wedge v +\half \lambda\wedge  [v\wedge A]+\half \lambda \wedge [A\wedge v]\right\rangle +H'(A)v \,d{\rm vol}\right).
    \end{split}
\end{equation}
  Next, using (\ref{E:AwedgeBsym}),(\ref{E:AbrackBskew}) and (\ref{E:ABwC}),   we have:
\begin{equation}
\begin{split}
    \left\langle d\lambda\wedge v +\half \lambda\wedge  [v\wedge A]+\half \lambda \wedge [A\wedge v]\right\rangle &=\left\langle d\lambda\wedge v +\lambda \wedge [v\wedge A] \right\rangle \\
    &=\left\langle d\lambda\wedge v +[A\wedge \lambda]\wedge v \right\rangle\\
    &=\left\langle d^A\lambda\wedge v\right\rangle,
    \end{split}
\end{equation}

We can then write (\ref{E:dShatAdt}) as
\begin{equation}\label{E:dShatAdt1}
\begin{split}
       \partial_1{\hat S}_H(A,\lambda)|v & = \int_{\Omega} \left( \left\langle d^A\lambda\wedge v\right\rangle +H'(A)v \,d{\rm vol}\right)\\
       &=\int_{\Omega} \la \ast (d^A\lambda) +\nabla H(A), v\ra \,d{\rm vol}.
    \end{split}
\end{equation}
The directional derivative with respect to $\lambda$ is:
\begin{equation}\label{E:dShatAdt3}
\begin{split}
  \partial_2{\hat S}_H(A,\lambda)|\beta &\stackrel{\rm def}{=}  \frac{d}{ds}\Big|_{s=0}{\hat S}_H(A, \lambda+s\beta)\\
  &= \int_{\partial\Omega}\la \beta\wedge (A-A^{(b)}\ra +\int_{\Omega}\la \beta\wedge F^A\ra 
  \end{split}
\end{equation}
\end{proof}

 We equip all spaces of $C^k$ differential forms, with values in $\gog$ or in $\R$, with a family of seminorms given by the suprema of derivatives of the components of the form. This makes each such space  of forms  a Fr\'echet space (Example 1.1.5 in \cite{Ham1982}).  

\begin{proposition}\label{P:SHC1} The function 
\begin{equation}
 {\hat S}_H: \mca_k\times\mca_k\to\R:  (A, \lambda)\mapsto  {\hat S}_H(A,\lambda)
 \end{equation}
is   directionally $C^1$.

\end{proposition}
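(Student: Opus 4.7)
The plan is to invoke the Michal--Bastiani/Hamilton notion of $C^1$ for Fr\'echet spaces (\cite{Ham1982}): since Proposition \ref{P:derivSH} already furnishes the directional (Gateaux) derivatives $\partial_1\hat S_H(A,\lambda)|v$ and $\partial_2\hat S_H(A,\lambda)|\beta$ everywhere on $\mca_k\times\mca_k$, it suffices to check that the two maps
$$(A,\lambda,v)\mapsto \partial_1\hat S_H(A,\lambda)|v,\qquad (A,\lambda,\beta)\mapsto \partial_2\hat S_H(A,\lambda)|\beta$$
are \emph{jointly} continuous on $\mca_k\times\mca_k\times\mca_k$ in the Fr\'echet topology. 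Since these are given by integrals over $\Omega$ and $\partial\Omega$ of expressions that are algebraic in the components of the forms and their first derivatives, the argument will reduce to standard continuity of multilinear maps and of composition with a smooth function in Fr\'echet spaces of sections.

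For the first derivative, I will use the formula
$$\partial_1\hat S_H(A,\lambda)|v \;=\; \int_\Omega \la \ast(d\lambda + [A\wedge\lambda])+\nabla H(A),\,v\ra\,d{\rm vol}.$$
The integrand is the sum of a term linear in the first-order partial derivatives of $\lambda$ and in the components of $v$, a trilinear term in the components of $A$, $\lambda$, $v$ (no derivatives), and the term $\la \nabla H(A),v\ra$, which is linear in $v$ and depends on $A$ through the smooth map $\nabla H$ on the bundle $\gog\otimes (T^*\Omega)$. Given sequences $A_n\to A$, $\lambda_n\to\lambda$, $v_n\to v$ in $\mca_k$, the defining seminorms on $\mca_k$ (sup of the components and of their derivatives up to order $k\geq 1$ on the compact $\overline{\Omega}$) imply uniform convergence of each of these contributions on $\overline\Omega$: smoothness of $\nabla H$ gives uniform convergence of $\nabla H(A_n)\to\nabla H(A)$ on $\overline{\Omega}$ by continuity of composition, and the multilinear pieces converge uniformly by an elementary estimate in terms of sup-seminorms. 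Bounding the integral by $|\Omega|$ times the sup-norm of the integrand yields the desired joint continuity.

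For the second derivative I will use
$$\partial_2\hat S_H(A,\lambda)|\beta \;=\; \int_{\partial\Omega}\la \beta\wedge (A-A^{(b)})\ra \;+\; \int_\Omega\la \beta\wedge F^A\ra,$$
with $F^A=dA+\tfrac12[A\wedge A]$. The boundary integral is bilinear in $\beta$ and $A-A^{(b)}$ and depends only on values (not derivatives) of the forms on $\partial\Omega$, hence is continuous under uniform convergence on $\partial\Omega$, which is controlled by the $\mca_k$ seminorms. The interior integral is bilinear in $\beta$ and $F^A$, and $F^A$ depends polynomially on $A$ and its first derivatives, so $F^{A_n}\to F^A$ uniformly on $\overline{\Omega}$ whenever $A_n\to A$ in $\mca_k$ (using $k\geq 1$). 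The same integral-versus-sup-norm estimate then gives joint continuity of $\partial_2\hat S_H$.

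The only mild subtlety, and the place where one must be careful, is to ensure that the Fr\'echet seminorms on $\mca_k$ simultaneously dominate the $C^1$-dependence required in the interior (through $d\lambda$ and $dA$ in $F^A$ and $d^A\lambda$) and the $C^0$-dependence required on the boundary; this is automatic once $k\geq 1$. Putting the two continuity statements together establishes that $\hat S_H$ is directionally $C^1$ on $\mca_k\times\mca_k$.
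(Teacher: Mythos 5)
Your proof is correct and takes essentially the same approach as the paper: the paper's own proof simply combines the derivative formulas of Proposition \ref{P:derivSH} into the single expression (\ref{E:hatSpHA}) and asserts that it "is continuous in $(A,\lambda,v,\beta)$", which is precisely the joint-continuity check you perform. Your write-up just supplies the details the paper leaves implicit (the multilinear structure of the integrands, uniform convergence controlled by the sup-seminorms on the compact $\overline{\Omega}$, and continuity of composition with the smooth map $\nabla H$).
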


\begin{proof} Combining (\ref{E:dShatAdt1}) and (\ref{E:dShatAdt2}), we have
\begin{equation}\label{E:hatSpHA}
 {\hat S}_H'(A,\lambda)| (v, \beta)= \int_{\Omega} \la \ast (d^A\lambda) +\nabla H(A), v\ra \,d{\rm vol} + \int_{\partial\Omega}\la \beta\wedge (A-A^{(b)}\ra +\int_{\Omega}\la \beta\wedge F^A\ra.
    \end{equation}
    This is continuous in $(A,\lambda, v, \beta)$. 
\end{proof}

\subsubsection{Coordinate expressions}

We now express the integrand in the first term on the right of (\ref{E:hatSpHA}) in coordinates. Let $E_1,\ldots, E_{\dim\gog}$ be an orthonormal basis of $\gog$, and we use the coordinate system $(x^1, x^2, x^3)$ on $\Omega$.  Then we can write $A$ and $\lambda$ as 
\begin{equation}
    \begin{split}
        A &= \sum_{r=1}^3\sum_{Z=1}^{\dim\gog}A_{Zr}\, E_Z \otimes dx^r\\
        \lambda &= \sum_{r=1}^3\sum_{Z=1}^{\dim\gog}\lambda_{Zr}\, E_Z \otimes dx^r.
    \end{split}
\end{equation}
    Then 
\begin{equation}
\begin{split}
  \ast (d^A\lambda) +\nabla H(A) &=\ast\left(\partial_q\lambda_p dx^q\wedge dx^p + [A_q, \lambda_p]dx^q\wedge dx^p \right) +\partial_{Di} H(A)E_D\otimes dx^i \\
  &=\epsilon_{qpi}\left( \partial_q\lambda_p + [A_q, \lambda_p]\right)\otimes dx^i +\partial_{Di} H(A) E_{D} \otimes dx^i \\
  &= \left(\epsilon_{qpi}  \partial_q\lambda_{Dp}E_{D} + \epsilon_{rpi}A_{Cr} \lambda_{Zp}[E_{C},E_{Z}]\right)\otimes dx^i +\partial_{Di} H(A) E_{D} \otimes dx^i \\
   &= \left[\epsilon_{qpi} \partial_q\lambda_{Dp} + A_{Cr} \lambda_{Zp}2{\tilde\epsilon}_{CZD} \epsilon_{rpi}   +\partial_{Zk} H(A)\right] E_{D} \otimes dx^k\\
   & \qquad \hbox{where ${\tilde\epsilon}_{CZD}=\frac{1}{2}\la [E_C, E_Z], E_D\ra_{\gog}$}\\
   &= \left[\epsilon_{qpi}\partial_q\lambda_{Dp} +2\lambda_{Zp}{\tilde\epsilon}_{CZD}\epsilon_{pir}A_{Cr}  + \partial_{Di} H(A)\right] E_{D}\otimes dx^i.
  \end{split}
\end{equation}
By the definition here of ${\tilde\epsilon}_{CZD}$ and by (\ref{E:TripEJKL}), 
we have
\begin{equation}
    {\tilde\epsilon}_{CZD}= \epsilon_{CZD}\qquad\hbox{for $\gog=su(2)$.}
\end{equation}

Now let $n$ be the ``outward''  unit vector field normal to $\partial\Omega$; then the orientation on $\partial\Omega$ is the one that makes an orthonormal frame $(u_1, u_2)$ on $\partial\Omega$ positively oriented if 
$$u_1\wedge u_2\wedge n = e_1\wedge e_2\wedge e_3,$$  
where $(e_1, e_2, e_3)$ is the standard basis of $\R^3$. If $\alpha$ is a $2$-form on the surface $\partial\Omega$, then, since $(u_1, u_2, n)$ is orthonormal,  
$$\alpha=\alpha(u_1, u_2)u^1\wedge u^2= (\alpha\wedge n_qdx^q)(u_1, u_2, n)u^1\wedge u^2= (\alpha\wedge n_qdx^q)(e_1, e_2, e_3)u^1\wedge u^2,$$
where $(u^1, u^2)$ is the frame of $T^*(\partial\Omega)$ dual to $(u_1, u_2)$. We will denote $u^1\wedge u^2$ by $da$, viewed effectively  as a surface measure.
The expression for ${\hat S}_H[A,\lambda]$ in  (\ref{E:dShatAdt2}), when expressed in terms of coordinates and components, works out to
\begin{equation}\label{E:dShatAdt4}
\begin{split}
&{\hat S}_H[A,\lambda]\\
&=-\int_{\partial\Omega} \lambda_{Zp}A^{(b)}_{Zr} [(dx^p\wedge dx^r\wedge n_qdx^q)(e_1, e_2, e_3)]\,da  + \int_{\Omega} A_{Zr} (\partial_q\lambda_{Yp})\la E_Y, E_Z\ra dx^q\wedge dx^p\wedge dx^r\\
&
\qquad +\int_{\Omega}\frac{1}{2} \lambda_{Zp}A_{Bq}A_{Cr}\la E_Z, [E_B, E_C]\ra dx^p\wedge dx^q\wedge dx^r +\int_{\Omega} H(A)\,d{\rm vol}\\
&= -\int_{\partial\Omega} \lambda_{Zp} A^{(b)}_{Zr} n_q\epsilon_{prq} \, da +\int_{\Omega} A_{Zr}(\partial_q\lambda_{Zp})\epsilon_{qpr} \,dx +\frac{1}{2} \int_{\Omega}\lambda_{Zp}A_{Bq}A_{Cr}2\epsilon_{BCZ}\epsilon_{pqr} \,dx +\int_{\Omega} H(A)\,dx\\
&= \int_{\Omega} \left[ A_{Zr}\epsilon_{rqp}\partial_q\lambda_{Zp} +\lambda_{Zp}\epsilon_{pqr}\epsilon_{ZBC}A_{Bq}A_{Cr} +H(A) \right]\, dx+\int_{\partial\Omega} -\lambda_{Zp}  \epsilon_{prq}  A^{(b)}_{Zr}  n_{q}\, da
\end{split}
\end{equation}
where we used (\ref{E:TripEJKL}) for the second equality, and $dx$ denotes the volume form $d{\rm vol}=dx^1\wedge dx^2\wedge dx^3$.

\subsubsection{The dual action $S_H(\lambda)$}

We assume now that the function $H$ is such that for each $\lambda\in \mca_k$ (or a suitably large subset of it) there is an $A^{(H)}(\lambda)\in\mca_k$ for which
\begin{equation}\label{E:defAHl}
    \ast\left(d^{ {A^{(H)}(\lambda)}}\lambda\right) +\nabla H((A^{(H)}(\lambda))=0,
\end{equation}
and that the {\em dual to primal map}
$$\lambda\mapsto A^{(H)}(\lambda)$$
is directionally $C^1$ (Definition \ref{D:direcder}) as a mapping between Fr\'echet spaces. Then we have the dual-space action
\begin{equation}\label{def: dual functional}
    S_H(\lambda)= {\hat S}\bigl[A^{(H)}(\lambda), \lambda].
\end{equation}
The existence of $A^{(H)}(\lambda)$ satisfying (\ref{E:defAHl}) needs to be established for each choice of the function $H$. In the following section, we will work through this for a specific choice for $H$ in which case the dual functional can be written out  using the coordinate expression (\ref{E:CSformcoords}) for the Chern-Simons form.

\begin{theorem}\label{T:SHcrit}
    With notation and assumptions as above,
    $S_H$ has a critical point at $\lambda$ if and only if 
    \begin{equation}\label{E:critptS}
        \begin{split}
            F^{A^{(H)}(\lambda)} &=0\\
            A^{(H)}(\lambda)|_{\partial\Omega} &= A^{(b)}
        \end{split}
    \end{equation}
\end{theorem}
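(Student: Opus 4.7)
The plan is to use the chain rule on $S_H(\lambda)=\hat S_H[A^{(H)}(\lambda),\lambda]$ and exploit the fact that the dual-to-primal map $A^{(H)}$ was designed precisely to annihilate $\partial_1\hat S_H$. Since $A^{(H)}$ is assumed directionally $C^1$ between Fr\'echet spaces and $\hat S_H$ is directionally $C^1$ by Proposition \ref{P:SHC1}, for any $\beta\in\mca_k$ the chain rule gives
\begin{equation*}
dS_H|_\lambda\,\beta \;=\; \partial_1\hat S_H\bigl(A^{(H)}(\lambda),\lambda\bigr)\bigl|\bigl(dA^{(H)}|_\lambda\,\beta\bigr) \;+\; \partial_2\hat S_H\bigl(A^{(H)}(\lambda),\lambda\bigr)|\beta.
\end{equation*}

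The defining condition (\ref{E:defAHl}) of $A^{(H)}(\lambda)$ is exactly $\ast(d^{A^{(H)}(\lambda)}\lambda)+\nabla H(A^{(H)}(\lambda))=0$, so by the formula (\ref{E:dShatAdt2}) of Proposition \ref{P:derivSH} we get $\partial_1\hat S_H(A^{(H)}(\lambda),\lambda)|v=0$ for every $v\in\mca_k$, and in particular the first term above vanishes regardless of what $dA^{(H)}|_\lambda\,\beta$ is. Substituting the formula for $\partial_2\hat S_H$ from Proposition \ref{P:derivSH}, the chain rule reduces to
\begin{equation*}
dS_H|_\lambda\,\beta \;=\; \int_{\partial\Omega}\bigl\la \beta\wedge\bigl(A^{(H)}(\lambda)-A^{(b)}\bigr)\bigr\ra \;+\; \int_\Omega\bigl\la \beta\wedge F^{A^{(H)}(\lambda)}\bigr\ra.
\end{equation*}

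From this identity the equivalence is immediate in one direction: if (\ref{E:critptS}) holds both integrands vanish identically, so $\lambda$ is a critical point. For the converse, suppose $dS_H|_\lambda\,\beta=0$ for all $\beta\in\mca_k$. First restrict to $\beta$ compactly supported in $\Omega$; the boundary term then drops and the fundamental lemma of the calculus of variations, applied componentwise in the orthonormal basis $\{E_Z\otimes dx^p\}$ of $\gog\otimes T^*\Omega$ (using the coordinate expression of $\la\beta\wedge F^A\ra$ as a top form on $\Omega$), forces $F^{A^{(H)}(\lambda)}=0$ pointwise. Then, with the bulk term already zero, varying $\beta$ with arbitrary tangential trace on $\partial\Omega$ gives $\int_{\partial\Omega}\la\beta\wedge(A^{(H)}(\lambda)-A^{(b)})\ra=0$ for all such $\beta$, which pins down the pullback of $A^{(H)}(\lambda)$ to $\partial\Omega$ as $A^{(b)}$.

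The step I expect to require the most care is extracting $A^{(H)}(\lambda)|_{\partial\Omega}=A^{(b)}$ from the vanishing boundary integral. Writing $\eta=A^{(H)}(\lambda)-A^{(b)}$ pulled back to $\partial\Omega$, the wedge-plus-Lie-pairing $(\beta,\eta)\mapsto\la\beta\wedge\eta\ra$ defines a bilinear form on $\gog$-valued tangential $1$-forms whose pointwise nondegeneracy must be verified: in local orthonormal frames on $\partial\Omega$ and in the basis $\{E_Z\}$ of $\gog$, one computes $\la\beta\wedge\eta\ra$ as $\sum_Z(\beta_{Z1}\eta_{Z2}-\beta_{Z2}\eta_{Z1})\,da$, which is nondegenerate in $\eta$ as $\beta$ ranges over tangential $\gog$-valued $1$-forms. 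A standard localization via bump functions then transfers this pointwise nondegeneracy to the integral statement and yields $\eta\equiv 0$ on $\partial\Omega$, completing the proof.
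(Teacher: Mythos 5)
Your proof is correct and follows essentially the same route as the paper: apply Hamilton's Fr\'echet-space chain rule to $S_H(\lambda)=\hat S_H[A^{(H)}(\lambda),\lambda]$, kill the $\partial_1\hat S_H$ term using the defining equation (\ref{E:defAHl}) of the DtP map, and read off the critical-point condition from the formula for $\partial_2\hat S_H$. The only difference is that you spell out the final equivalence (localization via compactly supported $\beta$ for the bulk term, then nondegeneracy of the pairing $\la\beta\wedge\eta\ra$ on the boundary), which the paper states without proof; your verification of that step is sound.
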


Note that  there is no boundary constraint on the field $\lambda$, and that the boundary value for $A$ given in the second equation in (\ref{E:critptS}) appears automatically as part of the condition for $\lambda$ to be a critical point of $S_H$.

\begin{proof} The function ${\hat S}_H$ is directionally $C^1$   and so, given our assumption that $\lambda\mapsto A^{(H)}(\lambda)$ is also $C^1$, it follows that Hamilton's Fr\'echet space chain rule \cite[Theorem 3.3.4]{Ham1982} holds (see (\ref{E:HamCR})). Therefore, we can
compute the directional derivative of $S_H$ at $\lambda$ along $\beta$:
\begin{equation}\label{E:derivSHlb}
    \begin{split}
    \frac{d}{dt}\Big|_{t=0}S_H(\lambda+t\beta) &=
        \frac{d}{dt}\Big|_{t=0}S\left(A^{(H)}(\lambda+t\beta), \lambda+t\beta\right)\\
        &= \partial_1S\left(A^{(H)}(\lambda),\lambda\right)\Big| \left( A^{(H)}\right)'(\lambda)|\beta +\partial_2S\left(A^{(H)}(\lambda),\lambda\right)|\beta\\
        &=0 +\partial_2S\left(A^{(H)}(\lambda), \lambda\right)|\beta\\
        &=\int_{\partial\Omega}\left\langle \beta\wedge \bigl(A^{(H)}(\lambda)-A^{(b)}\bigr)\right\rangle +\int_{\Omega}\la \beta\wedge F^{A^{(H)}(\lambda)}\ra.
    \end{split}
\end{equation}
Thus, $\lambda$ is a critical point of $S_H$ if and only if the last line in (\ref{E:derivSHlb}) is $0$ for all $\beta$, which is equivalent to the equations (\ref{E:critptS}).
    \end{proof}

\subsection{Quadratic $H$}\label{sec:quad_H}

In this section we develop the explicit form of $S_H[\lambda]$ for the specific choice of a `shifted' quadratic form for $H$.
The language is that of simple vector calculus. We write field arguments of functionals within square brackets. 

As seen in (\ref{E:CSformcoords}), the    Chern-Simons action integral for a domain $\Omega \subset \R^3$ is given by:
\begin{equation*}
CS[A] = \int_\Omega 2 \eps_{pqr} \left( A_{Zp} \p_q A_{Zr} + \frac{2}{3} \eps_{ZBC} A_{Zp} A_{Bq} A_{Cr} \right) \, dx,
\end{equation*}
where we assume Dirichlet boundary conditions $A(x) = A^{(b)} (x), x \in \p \Omega$.

Recall from (\ref{E:ELCS1}) that the Euler Lagrange equations for the  Chern-Simons action integral are given by the flatness condition:
\begin{equation}\label{eq:C-S E-L}
    \eps_{pqr} \p_q A_{Zr} +  \eps_{pqr} \eps_{ZBC} A_{Bq} A_{Cr} = 0.
\end{equation}
In addition to $A$, we now introduce $\lambda$, also an $su(2)$-valued $1$-form; thus, we work with
\[
A, \lambda : \quad \R^{3} \supset \Omega \to \R^{3 \times 3}
\]
and define the Lagrangian $\scl$ of the pre-dual functional as the weak form of \eqref{eq:C-S E-L} with $\lambda$ as a test function, along with an additive shifted quadratic potential for $H$; as well, all boundary conditions of the primal problem are imposed as natural boundary condition terms of the functional: 
\begin{equation}\label{E:hatSAlambda}
\begin{aligned}
     \widehat{S}[A, \lambda] & = \int_{\Omega} A_{Zr}\,\eps_{rqp} \, \p_q \lambda_{Zp} +   \lambda_{Zp} \, \eps_{pqr} \, \eps_{ZBC} \, A_{Bq} \, A_{Cr} + \half k \, |A - \bar{A}|^2 \, dx + \int_{\p \Omega} -\lambda_{Zp} \epsilon_{pqr} A_{Zr}^{(b)} n_q \, da\\
     & =: \int_{\Omega} \scl( A, \lambda, \curl \,  \lambda) \, dx + \int_{\p \Omega}  - \lambda : (A^{(b)} \times n)\, da.
\end{aligned}
\end{equation}
Here, $n$ is the outward unit normal of $\partial \Omega$, the cross-product is row-wise, $da$ is the volume form on $\partial\Omega$, and the symbol $:$ represents contraction on two indices. (In more detail, $n_p$ is the unit normal to  $T_p\partial\Omega$ such that $p+tn_p\notin\Omega$ for all positive $t$, and $da$ is the Riemannian volume form on $\partial \Omega$.) 

Now,
\begin{equation*}
\frac{\p \scl}{\p A_{Di}} = 0 \Longleftrightarrow \quad k \, \left( \delta_{CD} \, \delta_{ir} + \frac{2}{k}  \lambda_{Zp} \, \eps_{pir} \, \eps_{ZDC}\right) (A_{Cr} - \bar{A}_{Cr}) = - \eps_{iqp} \, \p_q \lambda_{Dp} -2 \lambda_{Zp} \, \eps_{pir} \, \eps_{ZDC} \, \bar{A}_{Cr}.
\end{equation*}
So that the DtP mapping is defined through
\begin{equation}\label{E:kKDtP}
    k \, \K_{DiCr}\left(A^{(H)}_{Cr} - \bar{A}_{Cr} \right) = - \eps_{iqp} \, \p_q \lambda_{Dp} -2 \lambda_{Zp} \, \eps_{pir} \, \eps_{ZDC} \, \bar{A}_{Cr}; \qquad \K_{DiCr} : = \delta_{DC} \, \delta_{ir} + \frac{2}{k}  \lambda_{Zp} \, \eps_{pir} \, \eps_{ZDC}
\end{equation} 
The dual functional can be expressed as, \emph{with the notation $A^{(H)} = {\sf A}$},
\begin{equation}\label{E:preduShat}
\begin{aligned}
    S[\lambda] & = \widehat{S}[A^{(H)}, \lambda] \\
      & = \int_{\Omega} dx \ {\sf A}_{Zr} \, \eps_{rqp} \, \p_q \lambda_{Zp} +  \lambda_{Zp} \, \eps_{pqr} \, \eps_{ZBC} \, {\sf A}_{Bq} \, {\sf A}_{Cr} + \half k \, |{\sf A} - \bar{A}|^2  - \int_{\p \Omega} \lambda : (A^{(b)} \times n)\, d a \\
     & = \int_\Omega dx \   \eps_{rqp} \, \p_q \lambda_{Zp} \, \big({\sf A}_{Zr} - \bar{A}_{Zr} \big)\\
     & \qquad \qquad + \bar{A}_{Zr} \, \eps_{rqp} \, \p_q \lambda_{Zp}\\
     & \qquad \qquad +  \lambda_{Zp} \, \eps_{pqr} \, \eps_{ZBC} \big( {\sf A}_{Bq} - \bar{A}_{Bq}\big) \big( {\sf A}_{Cr} - \bar{A}_{Cr}\big)\\ 
     & \qquad \qquad +   \lambda_{Zp} \, \eps_{pqr} \, \eps_{ZBC} \big( {\sf A}_{Bq} - \bar{A}_{Bq}\big) \bar{A}_{Cr}\\
     & \qquad \qquad +   \lambda_{Zp} \, \eps_{pqr} \, \eps_{ZBC} \, \bar{A}_{Bq} \big( {\sf A}_{Cr} - \bar{A}_{Cr}\big) \\
     & \qquad \qquad -   \lambda_{Zp} \, \eps_{pqr} \, \eps_{ZBC} \, \bar{A}_{Bq} \, \bar{A}_{Cr}\\
     & \qquad \qquad +   \lambda_{Zp} \, \eps_{pqr} \, \eps_{ZBC} \, \bar{A}_{Bq} \, \bar{A}_{Cr} +  
     \lambda_{Zp} \, \eps_{pqr} \, \eps_{ZBC} \, \bar{A}_{Bq} \, \bar{A}_{Cr}\\
     & \qquad \qquad + \half k \, \big( {\sf A}_{Zr} - \bar{A}_{Zr} \big) \big( {\sf A}_{Zr} - \bar{A}_{Zr} \big)\\
     & \qquad \qquad -  \int_{\p \Omega} \lambda : (A^{(b)} \times n)\, d a.
\end{aligned}
\end{equation}
The linear terms in $({\sf A} - \bar{A})$ are
\begin{equation*}
    \begin{aligned}
    & \eps_{rqp} \, \p_q \lambda_{Zp} \, \big({\sf A}_{Zr} - \bar{A}_{Zr} \big) +   \lambda_{Zp} \, \eps_{pqr} \, \eps_{ZBC} \big( {\sf A}_{Bq} - \bar{A}_{Bq}\big) \bar{A}_{Cr} +  \lambda_{Zp} \, \eps_{pqr} \, \eps_{ZBC} \, \bar{A}_{Bq} \big( {\sf A}_{Cr} - \bar{A}_{Cr}\big) \\
    & = ({\sf A}_{Di} - \bar{A}_{Di}) \, \eps_{iqp} \, \p_q \lambda_{Dp} +   \lambda_{Zp} \, \eps_{pir} \, \eps_{ZDC} \, \bar{A}_{Cr} ({\sf A}_{Di} - \bar{A}_{Di}) +   \lambda_{Zp} \, \eps_{pri} \, \eps_{ZCD} \, \bar{A}_{Cr} \big( {\sf A}_{Di} - \bar{A}_{Di}\big) \\
    & = ({\sf A}_{Di} - \bar{A}_{Di}) \, \eps_{iqp} \, \p_q \lambda_{Dp} + 2\lambda_{Zp} \, \eps_{pir} \, \eps_{ZDC} \, \bar{A}_{Cr} ({\sf A}_{Di} - \bar{A}_{Di})\\
    & = -  ( - \eps_{iqp} \, \p_q \lambda_{Dp} -2 \lambda_{Zp} \, \eps_{pir} \, \eps_{ZDC} \, \bar{A}_{Cr} ) ({\sf A}_{Di} - \bar{A}_{Di})
    \end{aligned}
\end{equation*}

The quadratic terms in $({\sf A} - \bar{A})$ are
\begin{equation*}
    \begin{aligned}
        &     \lambda_{Zp} \, \eps_{pqr} \, \eps_{ZBC} \big( {\sf A}_{Bq} - \bar{A}_{Bq}\big) \big( {\sf A}_{Cr} - \bar{A}_{Cr}\big) + \half k \, \big( {\sf A}_{Zr} - \bar{A}_{Zr} \big) \big( {\sf A}_{Zr} - \bar{A}_{Zr} \big) \\
        & =   \lambda_{Zp} \, \eps_{pir} \, \eps_{ZDC} \big( {\sf A}_{Di} - \bar{A}_{Di}\big) \big( {\sf A}_{Cr} - \bar{A}_{Cr}\big) + \half k \, \delta_{CD} \, \delta_{ir} \, \big( {\sf A}_{Di} - \bar{A}_{Di} \big) \big( {\sf A}_{Cr} - \bar{A}_{Cr} \big)\\
        & = \half k \left( \, \delta_{CD} \, \delta_{ir} \, + \frac{2}{k} \lambda_{Zp} \, \eps_{pir} \, \eps_{ZDC} \right) \big( {\sf A}_{Di} - \bar{A}_{Di} \big) \big( {\sf A}_{Cr} - \bar{A}_{Cr} \big)\\
        & = \half k \, \K_{DiCr} \big( {\sf A}_{Di} - \bar{A}_{Di} \big) \big( {\sf A}_{Cr} - \bar{A}_{Cr} \big)\\
        & = \half ( - \eps_{iqp} \, \p_q \lambda_{Dp} -2 \lambda_{Zp} \, \eps_{pir} \, \eps_{ZDC} \, \bar{A}_{Cr} ) \big( {\sf A}_{Di} - \bar{A}_{Di} \big);
    \end{aligned}
\end{equation*}
where we used (\ref{E:kKDtP}) in the last step.
So, the linear plus quadratic terms in $({\sf A} - \bar{A})$ are given by
\begin{equation*}
    \begin{aligned}
         & \left( - 1 + \half \right) \big( - \eps_{iqp} \, \p_q \lambda_{Dp} -2 \lambda_{Zp} \, \eps_{pir} \, \eps_{ZDC} \, \bar{A}_{Cr} \big) \big({\sf A}_{Di} - \bar{A}_{Di} \big)\\
        & = - \half \big( - \eps_{iqp} \, \p_q \lambda_{Dp} - 2\lambda_{Zp} \, \eps_{pir} \, \eps_{ZDC} \, \bar{A}_{Cr} \big) \big({\sf A}_{Di} - \bar{A}_{Di} \big),
    \end{aligned}
\end{equation*}
and since
\begin{equation}\label{E:AminusbarA}
   \big({\sf A}_{Di} - \bar{A}_{Di} \big) = \frac{1}{k} \left( \K^{-1} \right)_{DiCr} \big( - \eps_{rgh} \, \p_g \lambda_{Ch} - 2\lambda_{Bj} \, \eps_{jrn} \, \eps_{BCI} \, \bar{A}_{In}   \big) 
\end{equation} 
from the DtP mapping, the linear plus quadratic terms become

\begin{equation*}
   \begin{aligned}
- \half  & \big( - \eps_{iqp} \, \p_q \lambda_{Dp} -2 \lambda_{Zp} \, \eps_{pir} \, \eps_{ZDC} \, \bar{A}_{Cr} \big) \  \frac{1}{k} \left( \K^{-1} \right)_{DiCr} \big( - \eps_{rgh} \, \p_g \lambda_{Ch} - 2\lambda_{Bj} \, \eps_{jrn} \, \eps_{BCI} \, \bar{A}_{In}).
\end{aligned} 
\end{equation*}

The `constant' terms in $({\sf A} - \bar{A})$ are given by
\[
\left(- 1+1+1 \right) \lambda_{Zp} \, \eps_{pqr} \, \eps_{ZBC} \, \bar{A}_{Bq} \, \bar{A}_{Cr}  =   \lambda_{Zp} \, \eps_{pqr} \, \eps_{ZBC} \, \bar{A}_{Bq} \, \bar{A}_{Cr}.
\]
Thus, the dual Chern-Simons functional for $H = \half k |A - \bar{A}|^2$ is given by

\begin{equation}\label{eq:CS_quad_expl_dual}
    \begin{aligned}
    {\sf P}_{Di} & := - \eps_{iqp} \, \p_q \lambda_{Dp} - 2\lambda_{Zp} \, \eps_{pir} \, \eps_{ZDC} \, \bar{A}_{Cr}\\
    \K_{DiCr} & := \delta_{DC} \, \delta_{ir} + \frac{2}{k}  \lambda_{Zp} \, \eps_{pir} \, \eps_{ZDC}\\
        S[\lambda] & = \bigintsss_\Omega \half \left( - {\sf P}_{Di} \ \frac{1}{k} \left( \K^{-1} \right)_{DiCr} \  {\sf P}_{Cr} + 2\lambda_{Zp} \, \eps_{pqr} \, \eps_{ZBC} \, \bar{A}_{Bq} \, \bar{A}_{Cr} \right) \, dx - \int_{\p \Omega} \lambda : (A^{(b)} \times n)\, d a.\\
    \end{aligned}
\end{equation}
We note that if $\bar{A}$ is a solution to \eqref{eq:C-S E-L}, then $\lambda = 0$ is a critical point of $S$.

\section{Variational Analysis}\label{sec:var_anal}

In this section, we present a variational analysis for the E-L equation of the C-S functional based on the Direct Method of the Calculus of Variations for an appropriately defined dual functional related to the dual action functional $S_H$, \eqref{def: dual functional}. 

Let us first note that the C-S-functional itself is not suitable for applying the Direct Method of the Calculus of Variations in order to establish existence of critical points by proving the existence of a minimizer.   
Indeed, let $\varphi \in C^{\infty}_c(\Omega;[0,\infty))$ with $\int_{\Omega} \varphi(x)^3 \, dx > 0$ and $A^t_{Jk}(x) = t \varphi(x) \delta_{Jk}$, for $t \in \R$, then it  follows by \eqref{E:CSformcoords} that:
\begin{align*}
CS(A^t) &= 2 \int_{\Omega} \eps_{pqr} (A^t_{Jp} \partial_q A^t_{Jr} + \frac23 \eps_{JKL} A^t_{Jp} A^t_{Kq} A^t_{Lr}) \, dx \\
&= 2\int_{\Omega} \frac23 t^3 \varphi(x)^3 \eps_{pqr} \eps_{JKL} \delta_{Jp} \delta_{Kq} \delta_{Lr} \, dx \\
&= 8 t^3 \underbrace{\int_{\Omega} \varphi(x)^3 \, dx}_{>0}.
\end{align*}
By sending $t \to \pm\infty$ it is evident that the $CS$ functional cannot have a minimizer or a maximizer with finite energy in any function space that includes $C^{\infty}_c(\Omega;\R^{3\times 3})$. 

In this section, we consider the dual functional
\begin{equation}\label{eq: def tildeS}
\tilde{S}_H [\lambda] = \sup_{A \in X} \int_{\Omega} A: \operatorname{curl} \lambda  +  \lambda_{Zp} \eps_{pqr} \eps_{ZDC} A_{Dq} A_{Cr} - H(A) \, dx -   \int_{\partial \Omega} \lambda : ( A^{(b)} \times n) \, da, 
\end{equation}
where $n$ is the outer normal to $\partial \Omega$ and the function space $X$ and the domain of $\tilde{S}_H$ will be chosen such that the integral is well-defined. For regular functions $\lambda$ and $A^{(b)}$ the integral over the boundary $\partial \Omega$ is with respect to the surface measure on $\partial \Omega$. For less regular functions this term has to be understood in the sense of traces as will be explained once the function spaces for $A$ and $\lambda$ will be chosen for specific choices of $H$.  

Let us briefly discuss the connection between \eqref{eq: def tildeS} and the approach introduced in Section \ref{sec:geom_back}.
Consider
\begin{equation*}
    \begin{aligned}
        T_{Zp}(A) &: = \eps_{pqr} \eps_{ZDC} A_{Dq} A_{Cr} \\
        B(A, \lambda, n) & : = - \lambda : (A \times n) = - \lambda_{Zp} \eps_{prq} A_{Zr} n_q = \lambda_{Zp} \eps_{pqr} A_{Zr} n_q = A_{Zr} \eps_{rpq} \lambda_{Zp}  n_q =: A: (\lambda \times n)
    \end{aligned}
\end{equation*}
Then, if $A \mapsto H(A) \in \R$ is a strictly convex function with positive definite Hessian everywhere,
\begin{equation*}
    \begin{aligned}
      & \bullet \quad \sup_{\lambda}\left( \inf_{A} \int_\Omega A: \curl \lambda + \lambda : T(A) + H(A) \, dx  \quad + \quad \int_{\p \Omega} B(A^{(b)}, \lambda, n) \, da \right) \\
        &  \bullet \quad \inf_{\lambda }\left( \sup_{A} \int_\Omega - A: \curl \lambda - \lambda : T(A) - H(A) \, dx  \quad - \quad \int_{\p \Omega} B(A^{(b)}, \lambda, n) \, da \right) \\
       &    \bullet \quad \inf_{\lambda } \left( \sup_{A } \int_\Omega A: \curl \lambda + \lambda : T(A) - H(A) \, dx  \quad + \quad \int_{\p \Omega} B(A^{(b)}, \lambda, n) \, da \right) \\  
       &  \bullet \quad \sup_{\lambda} \left( \inf_{A} \int_\Omega - A: \curl \lambda - \lambda : T(A) + H(A) \, dx  \quad - \quad \int_{\p \Omega} B(A^{(b)}, \lambda, n) \, da \right)
    \end{aligned}
\end{equation*}
are four\footnote{We note that the second statement is the negative of the first and the fourth is the negative of the third. The second statement is equal to the third  whenever for each $\lambda$ in the set to which it belongs, $-\lambda$ also belongs to the set.} viable formulations of our dual scheme whose corresponding minimizers (or maximizers) can be used to define variational dual solutions (see Definition \ref{def: dual sol} below) for the Euler-Lagrange equations of the Chern-Simons action. In this Section, we analyze the third option above. The functional $S[\lambda]$ defined in Sec.~\ref{sec:quad_H} is related to the first option above with its maximizer defining weak solutions to the C-S Euler-Lagrange system in the presence of appropriate regularity. The viability arises from the fact that the inner $\sup_A$ ($\inf_A$) may be thought of as formally equivalent to evaluating $\scl(A(x), \dee(x))$ at its critical point w.r.t $A(x)$ for each fixed $\dee(x)$ (the latter generated from an  appropriate class of $\lambda$ fields); such a critical point is defined by $A^{(H)}(\dee(x))$. 
Hence, if the dual functional $\tilde{S}$ is regular enough in a minimizer, $x \mapsto A^{(H)}(\dee(x))$ will satisfy the Euler-Lagrange equation of $\tilde{S}$ which, by the discussion in Section \ref{ss:gdds}, is exactly the Euler-Lagrange equation of the Chern-Simons action functional \eqref{eq:C-S E-L}.

In light of the discussion above, we define the notion of a \textit{variational dual solution} and a \textit{dual solution}, c.f.~\cite{AcharyaGinsterSingh}.
\begin{definition} \label{def: dual sol} Given a Borel-measurable function $H$ we say that $\lambda$ is a variational dual solution to the Chern-Simons equation if it minimizes the dual functional $\tilde{S}_H$. 
Moreover, we say that $\lambda$ is a dual solution to the Chern-Simons equation if the dual functional $\tilde{S}_H$ is differentiable at $\lambda$, $\lambda$ satisfies the weak form of the corresponding Euler-Lagrange equation for $\tilde{S}_H$ and there exists a smooth dual-to-primal mapping $\lambda \to A^{(H)}(\lambda)$ corresponding to $\tilde{S}_H$.
\end{definition}
First, by the discussion above every dual solution gives rise to a weak solution of the Chern-Simons equation.
The notion of variational dual solutions is a weaker concept and not every variational dual solution is necessarily a dual solution or gives rise to a weak solution of  the Chern-Simons equation. However, the choice of the function $H$ does significantly influence this relationship, see also the discussion in \cite[Section 5]{AcharyaGinsterSingh}.

In this work, we focus on the existence of variational dual solutions for reasonable choices of $H$, i.e., we show the following:
\begin{theorem}\label{thm: existence}
    For $H(A) = \ell |A|^{\alpha}$, $\ell>0$, $\alpha \geq 2$, there exists a variational dual solution to the Chern-Simons equation.
\end{theorem}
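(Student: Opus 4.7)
The plan is to apply the Direct Method of the Calculus of Variations to $\tilde{S}_H$. The crucial structural observation is that for each fixed test field $A$, the map
$\lambda \mapsto \int_\Omega A:\curl\lambda + \lambda_{Zp}\epsilon_{pqr}\epsilon_{ZDC}A_{Dq}A_{Cr} - H(A)\,dx - \int_{\partial\Omega}\lambda:(A^{(b)}\times n)\,da$
is \emph{affine} in $\lambda$. Consequently $\tilde{S}_H$, as a pointwise supremum over $A$ of continuous affine functionals, is automatically convex and lower semi-continuous on any Banach space on which these affine functionals are continuous; by Mazur's theorem it is then weakly lower semi-continuous on any reflexive subspace. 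The task therefore reduces to identifying a reflexive $V$ on which $\tilde{S}_H$ is finite on a non-empty subset and coercive.

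For $\alpha>2$, I would take $V := \{\lambda \in L^{\alpha/(\alpha-2)}(\Omega;\R^{3\times 3}) : \curl\lambda \in L^{\alpha'}(\Omega;\R^{3\times 3})\}$ with $\alpha'=\alpha/(\alpha-1)$, and test fields $A \in L^{\alpha}(\Omega;\R^{3\times 3})$. Bounding the cross term via $|\lambda_{Zp}\epsilon_{pqr}\epsilon_{ZDC}A_{Dq}A_{Cr}|\leq c|\lambda||A|^2$ and applying Young's inequality pointwise shows that the integrand is bounded above by $C(|\lambda|^{\alpha/(\alpha-2)} + |\curl\lambda|^{\alpha'}) - (\ell/2)|A|^\alpha$, so that $\tilde{S}_H[\lambda]$ is finite for every $\lambda \in V$. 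The boundary term is read as the duality pairing between the tangential trace of $\lambda$ (well-defined since $\curl\lambda\in L^{\alpha'}$) and a sufficiently regular extension of $A^{(b)}\times n$, and is continuous on $V$.

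The heart of the proof is coercivity, which I would establish by producing two complementary lower bounds through carefully chosen test functions and combining them. Plugging in $A^\circ(x) := t\,|\curl\lambda(x)|^{\alpha'-1}\,\widehat{\curl\lambda(x)}$, with $t>0$ optimizing the $A:\curl\lambda-\ell|A|^\alpha$ balance, and absorbing the resulting $|\lambda||\curl\lambda|^{2/(\alpha-1)}$ cross-term via the parametric Young inequality $ab \leq \epsilon a^p/p + \epsilon^{-q/p}b^q/q$, yields an estimate of the form $\tilde{S}_H[\lambda] \geq c_0\|\curl\lambda\|_{\alpha'}^{\alpha'} - c_0'\|\lambda\|_{\alpha/(\alpha-2)}^{\alpha/(\alpha-2)} - C$. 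Plugging in instead the pointwise maximizer $A^\bullet = A^\bullet(\lambda)$ of $A \mapsto \lambda_{Zp}\epsilon_{pqr}\epsilon_{ZDC}A_{Dq}A_{Cr} - \ell|A|^\alpha$ (whose magnitude scales like $|\lambda|^{1/(\alpha-2)}$) produces the complementary bound $\tilde{S}_H[\lambda] \geq c_1\|\lambda\|_{\alpha/(\alpha-2)}^{\alpha/(\alpha-2)} - c_1'\|\curl\lambda\|_{\alpha'}^{\alpha'} - C'$. A convex combination of these two bounds with Young parameters tuned so that both coefficients remain strictly positive yields $\tilde{S}_H[\lambda] \geq c\,\|\curl\lambda\|_{\alpha'}^{\alpha'} + c'\,\|\lambda\|_{\alpha/(\alpha-2)}^{\alpha/(\alpha-2)} - C''-|\text{boundary term}|$; the boundary term, being linear in $\lambda$, is absorbed into the superlinear interior contributions via a trace inequality. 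Extracting a weakly convergent subsequence from any minimizing sequence via Banach–Alaoglu and invoking weak lower semi-continuity then delivers the desired minimizer.

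The main obstacle is the coercivity step: one must carefully track constants across the two Young-type estimates and their convex combination to guarantee strictly positive coefficients on both $\|\curl\lambda\|_{\alpha'}^{\alpha'}$ and $\|\lambda\|_{\alpha/(\alpha-2)}^{\alpha/(\alpha-2)}$, which is where the freedom in the parameter $\epsilon$ and the positivity of $\ell$ enter essentially. The borderline case $\alpha=2$ requires separate treatment, since then $\alpha/(\alpha-2)$ is infinite and the quadratic cross term competes with $H(A)$ on the same scale: the natural effective domain of $\tilde{S}_H$ becomes the convex open subset of $L^\infty$-fields $\lambda$ with $\curl\lambda\in L^2$ satisfying the pointwise spectral condition that $\ell I$ strictly dominates the symmetric operator $L(\lambda)$ on $\R^{3\times 3}$ characterised by $A:L(\lambda)A = \lambda_{Zp}\epsilon_{pqr}\epsilon_{ZDC}A_{Dq}A_{Cr}$. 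On this domain the inner supremum reduces to an explicit quadratic form in $\curl\lambda$ depending on $\lambda$, and both lower semi-continuity and coercivity can be read off directly.
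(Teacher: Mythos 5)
Your proposal is correct in outline and shares the paper's overall architecture --- the same dual functional, the same function spaces ($\lambda\in L^{\beta}$, $\curl\lambda\in L^{\alpha'}$ with $\beta=\alpha/(\alpha-2)$ for $\alpha>2$; $L^{\infty}\times L^{2}$ for $\alpha=2$), convexity and weak lower semicontinuity from the sup-of-affine structure, then the Direct Method --- but the coercivity step, which is the heart of the matter, goes by a genuinely different route. The paper (Proposition~\ref{prop: lb higher}) proves the \emph{pointwise} bound $g_H(\lambda,\mu)\geq c\left(|\mu|^{\alpha'}+|\lambda|^{\beta}\right)$ by splitting into three regimes according to the relative sizes of $|\lambda|$ and $|\mu|$; in each regime a single test field is inserted and the dangerous cross term is absorbed by the regime hypothesis itself, so no parametric Young loss ever appears (the price being a delicately chosen constant $\ell$ in $H$, plus a separate scaling argument reducing general $\ell$ to that one). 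You instead derive two \emph{global} bounds, each from one test field plus a Young inequality, and combine them convexly, which requires the product condition $c_0c_1>c_0'c_1'$ on the gain/loss constants --- precisely the step you flag as the main obstacle but leave unverified. It does close, and the concrete mechanism is worth recording: for the curl-aligned field $A^{\circ}$ of amplitude $t$, the gain $t\,\|\curl\lambda\|_{L^{\alpha'}}^{\alpha'}$ is linear in $t$ while the $H$-term and the Young loss are $O(t^{\alpha})$ and $O(t^{2})$ respectively, so $c_0\geq t/2$ and $c_0'\leq C t^{2}$ for $t$ small; fixing the amplitude of $A^{\bullet}$ and its Young parameter (hence $c_1>0$ and $c_1'<\infty$) and then letting $t\downarrow 0$ makes $c_0c_1/(c_0'c_1')$ arbitrarily large. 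With that observation your scheme works for every $\ell>0$ at once, dispensing with the paper's reduction to a specific $\ell$; what the paper's case analysis buys in exchange is a clean pointwise lower bound on $g_H$ with explicit constants and no tuning, and it also sidesteps a small technical point in your version: plugging $A^{\bullet}(\lambda(x))$ into the supremum requires $x\mapsto A^{\bullet}(\lambda(x))$ to be a measurable field in $L^{\alpha}$ (true here, since $|A^{\bullet}|\sim|\lambda|^{1/(\alpha-2)}$ and the row-adapted frame can be chosen measurably), whereas the paper's formulation through $g_H$ never needs a selection. Your $\alpha=2$ sketch --- the effective domain forces a pointwise bound on $|\lambda|$, after which the supremum is an explicit quadratic form in $\curl\lambda$ --- is the same mechanism as Proposition~\ref{prop: lb quadratic}, which states it contrapositively: $g_H=+\infty$ as soon as $|\lambda|>3/2$.
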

\begin{remark}
    It can be seen from the proof that the statement of the theorem stays true for all convex functions $H$ such that $|H(A)| \leq \ell \left( |A|^{\alpha} + 1 \right)$, c.f.~also Remark \ref{rem: mixed}. Note that this includes the case $1 \leq  \alpha < 2$. However, in this case the dual functional is degenerate in the sense that $\tilde{S}_H[\lambda] = +\infty$ for all $\lambda \neq 0$. Hence, a variational dual solution exists but is not meaningful.
\end{remark}
In particular, this shows that the problem of finding an \emph{extremal} point for the Chern-Simons functional can (at least in a weak sense) be replaced by finding a \emph{minimizer} of an appropriate dual functional.

By the definition of $\tilde{S}_H$ it can be checked after fixing appropriate function spaces that $\tilde{S}_H$ is convex and lower semi-continuous with respect to the appropriate weak (or weak*) convergence of $\lambda$ and $\operatorname{curl} \lambda$ in these spaces. This will be made more precise below once the function spaces and $H$ will be fixed. Hence, in order to prove Theorem \ref{thm: existence}, i.e., to establish the existence of varitional dual solutions, using the Direct Method of the Calculus of Variations the key step is to show that $\tilde{S}_H$ is (weakly) coercive on an appropriate function space for $\lambda$. This will be done in Proposition \ref{prop: coerc higher} (for $\alpha > 2)$ and Proposition \ref{prop: coerc quadratic} (for $\alpha=2$). For a more detailed discussion we refer to \cite[Section 5]{AcharyaGinsterSingh}.

In order to simplify notation we define for a convex function $H: \R^{3\times 3} \to \R$, the function $g_H: \R^{3 \times 3} \times \R^{3\times 3} \to \R \cup \{+\infty\}$ as
\begin{equation*}
g_H(\lambda,\mu) = \sup_{A \in \R^{3\times 3}} A:\mu +  \lambda_{Zp} \eps_{pqr} \eps_{ZBC} A_{Bq} A_{Cr} - H(A).
\end{equation*}
As in \cite{AcharyaGinsterSingh} it can then be argued that it holds in the setting discussed below that 
\begin{equation*}
\tilde{S}_H[\lambda] = \int_{\Omega} g_H(\lambda,\operatorname{curl} \lambda) \, dx - \int_{\partial \Omega} \lambda : ( A^{(b)} \times n) \, da.
\end{equation*}

\subsection{The case $\alpha > 2$}

Let $\infty > \alpha > 2$, $\alpha' := \frac{\alpha}{\alpha-1}$ and $\beta := \left(\frac{\alpha}2\right)' = \frac{\alpha}{\alpha-2}$. Note that $\beta \geq \alpha'$ since $\alpha/2 \leq \alpha$. 

The case of $\alpha = 4$ is worked out in Appendix \ref{app:quartic} with slightly modified constants as a gentle example of the general case dealt with herein.

In the case of $\alpha$-growth of $H$, the natural space for the primal variables $A$ is $X = L^{\alpha}(\Omega;\R^{3\times 3})$ and, accordingly, $(\lambda,\curl \lambda) \in L^{\beta}(\Omega; \R^{3\times 3}) \times L^{\alpha'}(\Omega;\R^{3\times3})$, where $\operatorname{curl} \lambda$ has to be understood in the sense of distributions. Note that these choices guarantee by H\"older's inequality that the integral 
$$\int_{\Omega} A:\operatorname{curl} \lambda  +  \lambda_{Zp} \eps_{pqr} \eps_{ZBC} A_{Bq} A_{Cr} - H(A) \, dx$$
is well-defined. 

On the other hand, let $\lambda \in L^{\beta}$ with $\operatorname{curl} \lambda \in L^{\alpha'}$ and a measurable function $A: \Omega \to \R^{3\times 3}$ such that $g_H(\lambda(x), \operatorname{curl} \lambda(x)) = A(x):\operatorname{curl} \lambda(x)  +  \lambda_{Zp}(x) \eps_{pqr} \eps_{ZBC} A_{Bq}(x) A_{Cr}(x) - H(A(x))$. Then the first order optimality condition for $A$ reads as
\[
0 = (\operatorname{curl }\lambda(x))_{ij} + \lambda_{Zp}(x) \eps_{pjr} \eps_{ZiC} A_{Cr}(x) + \lambda_{Zp}(x) \eps_{pqj} \eps_{ZBi} A_{Bq}(x)  - (D H (A(x)))_{ij}.
\]
Since $|D H (A(x))| = \alpha \ell |A(x)|^{\alpha - 1}$ it follows that
\[
|A(x)|^{\alpha - 1} \leq C'\left( |\operatorname{curl} \lambda(x)| + |\lambda(x)| \cdot |A(x)| \right),
\]
which implies that 
\[
|A(x)|^{\alpha} \leq \tilde{C} \left( |\operatorname{curl} \lambda(x)|^{\frac{\alpha}{\alpha -1 }} + |\lambda(x)|^{\frac{\alpha}{\alpha-2}} \right) = \tilde{C} \left( |\operatorname{curl} \lambda(x)|^{\alpha'} + |\lambda(x)|^{\beta} \right) \in L^1(\Omega),
\]
i.e., $A \in L^{\alpha}(\Omega;\R^{3\times 3})$.

Next, note that for functions  $\lambda \in L^{\beta}(\Omega; \R^{3\times 3}) \subseteq L^{\alpha'}(\Omega; \R^{3\times 3})$ with $\operatorname{curl} \lambda \in L^{\alpha'}(\Omega; \R^{3\times 3})$ there exists a linear, continuous trace operator for the term $\lambda \times n$ into the space $W^{-1+ \frac1{\alpha},\alpha'}(\partial \Omega; \R^{3\times 3})$, the dual of the space $W^{1-\frac1{\alpha},\alpha}(\partial \Omega; \R^{3\times 3})$ which in turn is the trace space of $W^{1,\alpha}(\Omega,\R^{3\times 3})$, c.f.~see \cite[Theorem 18.40]{LeoniSobolev} and \cite[Theorem 1, page 204]{Lions} for the case $\alpha =2$ (the other cases are similar). In particular, assuming that $A^{(b)} \in W^{1-\frac1{\alpha},\alpha}(\partial \Omega; \R^{3\times 3})$ (i.e., $A^{(b)}$ is the trace of a function in $W^{1,\alpha}(\Omega;\R^{3\times 3})$) then the well-defined interpretation of the integral $\int_{\partial \Omega} A^{(b)} : (\lambda \times n) \, da$ is the dual pairing $\langle A^{(b)}, \lambda \times n \rangle_{W^{1-\frac1{\alpha},\alpha}(\partial \Omega; \R^{3\times 3}),W^{-1+\frac1{\alpha},\alpha'}(\partial \Omega; \R^{3\times 3})}$. This expression is continuous with respect to the weak convergences $\lambda_k \rightharpoonup \lambda$ in $L^{\beta}$ and $\operatorname{curl} \lambda_k \rightharpoonup \operatorname{curl} \lambda$ in $L^{\alpha'}$. Moreover, it holds 
\begin{align} \nonumber 
    \left| \langle A^{(b)}, \lambda \times n \rangle_{W^{1-\frac1{\alpha},\alpha},W^{-1+\frac1{\alpha},\alpha'}} \right| &\leq \|A^{(b)} \|_{W^{1-\frac1{\alpha},\alpha}} \| \lambda \times n\|_{W^{-1+\frac1{\alpha},\alpha'}} \\ &\leq C' \|A^{(b)} \|_{W^{1-\frac1{\alpha},\alpha}} \left( \| \lambda\|_{L^{\beta}} + \| \operatorname{curl } \lambda \|_{L^{\alpha'}} \right). \label{eq:higher_trace}
\end{align} 

It can then be shown that in this setting $\tilde{S}_H$ is lower-semicontinuous with respect to $\lambda_k \rightharpoonup \lambda$ in $L^\beta$ and $\operatorname{curl} \lambda_k \to \operatorname{curl} \lambda$ in $L^{\alpha'}$.
Hence, the main step to prove existence of a minimizer for $\tilde{S}_H$ using the Direct Method of the Calculus of Variations is to prove coercivity of $\tilde{S}_H$.

Note that in order to show coercivity of $\tilde{S}_H$ for all $H(A) = \ell |A|^{\alpha}$, $\ell >0$, it suffices to consider one specific choice for $\ell$. 
Indeed, for general $\ell'$ one uses for $\tilde{\ell} = \frac{\ell'}{\ell} $ that 
\begin{align}
&\sup_{A}   A:\mu +  \lambda_{Zp} \eps_{pqr} \eps_{ZBC} A_{Bq} A_{Cr} - \ell' |A|^{\alpha} \\ = &\sup_A \tilde{\ell} \left( A: (\tilde{\ell}^{-1}\mu) +  (\tilde{\ell}^{-1}\lambda_{Zp}) \eps_{pqr} \eps_{ZBC} A_{Bq} A_{Cr} - \ell |A|^{\alpha}  \right).
\end{align}
Assuming that coercivity holds for $\tilde{S}_H$ for the specific $\ell>0$, the above yields also coercivity with respect to $\tilde{\ell}^{-1} \lambda$ and $\tilde{\ell}^{-1} \operatorname{curl} \lambda$ for general $\ell'$ which in turn implies coercivity with respect to $\lambda$ and $\operatorname{curl} \lambda$. For technical reasons that become apparent in the proof, we will consider $\ell= \frac1{\alpha \cdot 2^{\alpha/2} \cdot 16 \cdot 3^{\beta}} \left(4 \alpha'\right)^{-\frac{\alpha}{2-\alpha'}}$.

The main step to show the coercivity of $\tilde{S}_H$ is to prove prove lower bounds on the function $g_H$.

\begin{proposition}\label{prop: lb higher}
    Let $H(A) = \frac1{\alpha \cdot 2^{\alpha/2} \cdot 16 \cdot 3^{\beta}} \left(4 \alpha'\right)^{-\frac{\alpha}{2-\alpha'}} |A|^{\alpha}$. Then there exists $c>0$ such that 
    \[
    g_H(\lambda, \mu) \geq c \left( |\mu|^{\alpha'} + |\lambda|^{\beta} \right).
    \]
\end{proposition}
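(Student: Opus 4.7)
The plan is to bound below the supremum defining $g_H$ by testing with one-parameter families $A = sM$ (unit $M \in \R^{3\times 3}$, real scalar $s$) and optimizing in $s$. Writing $T_{Zp}(M):=\eps_{pqr}\eps_{ZBC}M_{Bq}M_{Cr}$, for every $|M|=1$ and $s\in\R$,
\begin{equation*}
g_H(\lambda,\mu) \geq s(M:\mu) + s^2\, \lambda:T(M) - \ell |s|^\alpha.
\end{equation*}
It suffices to prove $g_H(\lambda,\mu) \geq c\max\{|\mu|^{\alpha'},|\lambda|^\beta\}$ for a universal $c>0$, since $\max\{x,y\}\geq\tfrac12(x+y)$ for $x,y\geq0$. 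I split into two cases according to which of $|\mu|^{\alpha'}$ and $|\lambda|^\beta$ dominates.

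In the first case, $|\mu|^{\alpha'}\geq|\lambda|^\beta$ (equivalently $|\lambda|\leq|\mu|^{(\alpha-2)/(\alpha-1)}$), I take $M=\mu/|\mu|$ and $s=c_1|\mu|^{1/(\alpha-1)}$ for a small $c_1>0$. Using $|\lambda:T(M)|\leq C|\lambda|$ with $C:=\sup_{|M|=1}|T(M)|<\infty$, the linear and $\alpha$-power contributions sum to $(c_1-\ell c_1^\alpha)|\mu|^{\alpha'}$, while the cross-term $s^2\lambda:T(M)$ is bounded in absolute value by $Cc_1^2|\mu|^{2/(\alpha-1)}|\lambda|\leq Cc_1^2|\mu|^{\alpha'}$ by the case hypothesis. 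Choosing $c_1$ small enough absorbs both the $\alpha$-power and the cross-term into the linear term, giving $g_H(\lambda,\mu)\geq c|\mu|^{\alpha'}$.

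In the second case, $|\lambda|^\beta>|\mu|^{\alpha'}$ (equivalently $|\mu|<|\lambda|^{(\alpha-1)/(\alpha-2)}$), the crucial point is to exhibit a unit $M=M(\lambda)$ with $\lambda:T(M)\geq c_*|\lambda|$ for a universal $c_*>0$. Directly from the definition of $T$ one checks $T(A)=2(\adj A)^\top$, so that $\lambda:T(M)=2\,\lambda^\top:\adj(M)$. I claim that the image of the adjugate contains every matrix of the form $\gamma\,ba^\top$ with unit $a,b$ and any $\gamma\in\R$: given unit $a,b$, take orthonormal bases $\{u_1,u_2\}$ of $a^\perp$ and $\{v_1,v_2\}$ of $b^\perp$, set $M=\sigma_1 u_1v_1^\top+\sigma_2 u_2v_2^\top$, and compute $\adj(M)=\pm\sigma_1\sigma_2\,ba^\top$ (the sign being adjustable by flipping one basis vector). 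Selecting $a,b$ to be top left/right singular vectors of $\lambda$ and $\sigma_1=\sigma_2=1/\sqrt{2}$ (so $|M|=1$) yields $\lambda:T(M)\geq\sigma_1(\lambda)\geq|\lambda|/\sqrt{3}$. With this $M$ and $A=tM$ ($t>0$),
\begin{equation*}
g_H(\lambda,\mu)\geq -t|\mu|+c_*t^2|\lambda|-\ell t^\alpha.
\end{equation*}
The concave function $t\mapsto c_*t^2|\lambda|-\ell t^\alpha$ attains its maximum, of order $c_*(c_*/\ell)^{2/(\alpha-2)}|\lambda|^\beta$, at $t\sim(c_*|\lambda|/\ell)^{1/(\alpha-2)}$; at this $t$ the case hypothesis gives $t|\mu|\leq(c_*/\ell)^{1/(\alpha-2)}|\lambda|^\beta$, which is absorbed into the quadratic term provided $\ell$ is small enough compared with $c_*$. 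Precisely this requirement forces the explicit value of $\ell$ stated in the proposition, after which routine bookkeeping yields $g_H(\lambda,\mu)\geq c|\lambda|^\beta$.

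The principal obstacle is the non-degeneracy lower bound in Case 2, i.e., showing that $\lambda\mapsto\max_{|M|=1}\lambda:T(M)$ is bounded below by a universal multiple of $|\lambda|$. This is where the particular algebraic structure of the Chern-Simons nonlinearity enters (via the identification of $T$ with (the transpose of) the adjugate), and is handled by the rank-one SVD construction above; all remaining steps are routine optimization of concave scalar functions followed by careful tracking of constants through the two absorption arguments.
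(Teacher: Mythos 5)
Your proposal is correct in substance but follows a genuinely different route from the paper's proof. The paper splits into \emph{three} regimes according to the size of $|\mu|$ relative to $|\lambda|^{1/(2-\alpha')}$ and in each regime evaluates the supremum at a single hand-picked test field ($A=\mu|\mu|^{\alpha'-2}$, then $A=(4|\lambda|)^{-1}\mu$, then a rank-two matrix built from the largest row $\Lambda_V$ of $\lambda$), with the peculiar stated value of $\ell$ chosen exactly so that all three computations close. You instead split into \emph{two} regimes according to which of $|\mu|^{\alpha'}$ and $|\lambda|^{\beta}$ dominates, and in each regime optimize over a free scalar multiple of a unit test matrix, which absorbs the paper's middle regime. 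Your key lemma --- $\max_{|M|=1}\lambda : T(M)\geq |\lambda|/\sqrt3$, via the identity $T(A)=2(\operatorname{adj}A)^{\top}$ and the SVD rank-two construction --- is the invariant form of what the paper does concretely in its case 3: the paper's test matrix (rows $m^{(\beta-1)/2}E_Y$, $m^{(\beta-1)/2}E_Z$, and $0$) is precisely your construction with $a$ a coordinate vector and $b=\Lambda_V/|\Lambda_V|$, and both yield the same constant through $|\Lambda_V|\geq|\lambda|/\sqrt3$. Your packaging explains \emph{why} the construction works; the paper's is more elementary and keeps every constant explicit. (Cosmetic: $t\mapsto c_*t^2|\lambda|-\ell t^{\alpha}$ is not concave near $t=0$, but it does attain its maximum on $(0,\infty)$, which is all you use.)

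The one step you cannot wave off is the last one. Your case-2 absorption works if and only if $\ell\leq \ell_0(\alpha):=\frac{2c_*}{\alpha}\bigl(\frac{(\alpha-2)c_*}{2\alpha}\bigr)^{\alpha-2}$ with $c_*=1/\sqrt3$, and this threshold is \emph{not} the value of $\ell$ in the proposition, so your requirement does not ``force the explicit value of $\ell$ stated''; the proposition asserts the bound for its own specific $\ell$, and you must check that this $\ell$ lies below your threshold. It does, and the check is short: the stated $\ell$ is at most $4^{-\alpha}/(96\alpha)$, since $(4\alpha')^{-\alpha/(2-\alpha')}\leq 4^{-\alpha}$, $2^{\alpha/2}\geq 2$ and $3^{\beta}\geq 3$, while $\ell_0(\alpha)\geq \frac{24}{\sqrt3}\,e^{-2}\,\frac{(2\sqrt3)^{-\alpha}}{\alpha}>\frac{4^{-\alpha}}{96\alpha}$, using $\bigl(\frac{\alpha-2}{\alpha}\bigr)^{\alpha-2}\geq e^{-2}$ and $2\sqrt3<4$. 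Alternatively, you can avoid any comparison of constants via the scaling identity $g_{\ell|\cdot|^{\alpha}}(\lambda,\mu)=\ell^{-2/(\alpha-2)}\,g_{|\cdot|^{\alpha}}\bigl(\lambda,\ell^{1/(\alpha-2)}\mu\bigr)$, which transfers the two-sided lower bound proved for any one value of $\ell$ to every other value (with a different $c$). With either patch inserted, your argument is a complete and valid proof of the proposition.
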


\begin{proof}
    Fix $\lambda, \mu \in \R^{3\times 3}$. 
    We distinguish the following cases.
    \begin{enumerate}
        \item Assume that $|\lambda| \leq \frac1{4\alpha'} |\mu|^{ 2 - \alpha'}$. 
        Let $A = \mu |\mu|^{\alpha'-2}$ and write 
        $$K_{Zp} = \eps_{pqr} \eps_{ZBC} A_{Bq} A_{Cr} =  \eps_{ZDC} (A_D \times A_C)_p.$$
        First, note by Young's inequality that
        \begin{align}
            |K|^2 &= 4 |A_1 \times A_2|^2 + 4 |A_2 \times A_3|^2 + 4 |A_3 \times A_1|^2 \nonumber \\
            &\leq 4 |A_1|^2 |A_2|^2 + 4 |A_2|^2 |A_3|^2 +  4 |A_3|^2 |A_1|^2 \leq 4| A_1|^4 +  4 |A_2|^4 + 4 |A_3|^4 \leq 4 |A|^4. \label{eq: est K}
        \end{align}
        Since, $\alpha (\alpha' - 1) = \frac{\alpha}{\alpha - 1} = \alpha'$ it follows
        \begin{align*}
            g_H(\lambda, \mu) \geq \mu:A - 2 |\lambda| |A|^2 - \frac1{\alpha}|A|^{\alpha} &= \frac1{\alpha'} |\mu|^{\alpha'} - 2 |\lambda| |\mu|^{2 (\alpha'-1)} \\ & \geq \frac1{2\alpha'} |\mu|^{\alpha'} \geq \frac1{4 \alpha'} \left( |\mu|^{\alpha'} + |\lambda|^{\beta} \right). 
        \end{align*}
        For the last estimate we used that $|\lambda|^{\beta} \leq \frac1{(4\alpha')^{\beta}} |\mu|^{(2 - \alpha')\beta} \leq |\mu|^{\alpha'}$ as $(2-\alpha')\beta = \frac{\alpha - 2}{\alpha- 1} \beta = \frac{\alpha}{\alpha-1} = \alpha'$.
        \item Assume that $\frac1{4\cdot 3^{\beta/2}}  |\lambda|^{\frac1{2-\alpha'}} \leq |\mu| \leq \left( 4\alpha' \right)^{\frac1{2-\alpha'}} |\lambda|^{\frac1{2-\alpha'}}$. 
        For $A = (4|\lambda|)^{-1} \mu$ we find using $\frac1{2-\alpha'} = \frac{\alpha-1}{\alpha-2} = \frac{\beta}{\alpha'}$, $\alpha \left(\frac{1}{2-\alpha'} - 1\right)= \frac{\alpha}{\alpha-2} = \beta$ and $\frac{2}{2-\alpha'} - 1 = \frac{\alpha'}{2-\alpha'} = \beta$ 
        \begin{align*}
            g_H(\lambda, \mu) &\geq \mu: A - 2 |\lambda| |A|^2 - \frac1{16 \cdot 3^{\beta}} \left(4 \alpha'\right)^{-\frac{\alpha}{2-\alpha'}} |A|^{\alpha} \\
            &= \frac14 |\lambda|^{-1} |\mu|^2 - \frac1{8} |\lambda|^{-1} |\mu|^2 - \frac1{4^{\alpha} \cdot 16 \cdot 3^{\beta}} \left(4 \alpha'\right)^{-\frac{\alpha}{2-\alpha'}} |\lambda|^{-\alpha} |\mu|^{\alpha} \\
            &\geq \frac1{8\cdot 16 \cdot 3^{\beta}} |\lambda|^{-1+ \frac{2}{2-\alpha'}} - \frac1{16 \cdot 16 \cdot 3^{\beta}} \left(4 \alpha'\right)^{-\frac{\alpha}{2-\alpha'}} |\lambda|^{-\alpha} |\mu|^{\alpha} \\
            &\geq \frac1{8 \cdot 16 \cdot 3^{\beta}} |\lambda|^{\beta} -  \frac1{16 \cdot 16 \cdot 3^{\beta}} \left(4 \alpha'\right)^{-\frac{\alpha}{2-\alpha'}} \left(4 \alpha'\right)^{\frac{\alpha}{2-\alpha'}} |\lambda|^{\alpha (-1 + \frac1{2-\alpha'})}   \\
            &= \frac1{8 \cdot 16 \cdot 3^{\beta}} |\lambda|^{\beta} -  \frac1{16 \cdot 16 \cdot 3^{\beta}} |\lambda|^{\beta} \\
            &= \frac1{16 \cdot 16 \cdot 3^{\beta}}|\lambda|^{\beta} \geq \frac1{32 \cdot 16 \cdot 3^{\beta}}(4 \alpha')^{-\frac{\alpha'}{2-\alpha'}}  \left( |\lambda|^{\beta} + |\mu|^{\alpha'} \right).
        \end{align*}
In the last estimate we used that $|\mu|^{\alpha'} \leq (4 \alpha')^{\frac{\alpha'}{2-\alpha'}} |\lambda|^{\beta}$.
  
         \item Assume that $|\mu| \leq \frac1{4\cdot 3^{\beta/2}} |\lambda|^{\frac1{2-\alpha'}}$. Let $V \in \{1,2,3\}$ such that it holds for the $V^{th}$ row, $\Lambda_V \in \R^3$, of $\lambda \in \R^{3\times 3}$ that $ m^2:= \lambda_{Vi} \lambda_{Vi} \mbox{ (no sum on $V$) } = |\Lambda_V|^2  \geq \frac13 |\lambda|^2$, $m > 0$. 
        Choose an orthonormal basis $(E_1, E_2, E_3)$ with $E_V = \frac{1}{m}\Lambda_V$ and $\half \eps_{VYZ} E_Y \times E_Z = E_V$.
        Now, define $A \in \R^{3\times 3}$ row-wise as 
        \[
        A_{Yi} =  m^{(\beta-1)/2} \, E_{Yi} \quad \text{ if } Y \neq V \qquad \text{ and } \qquad A_{Vi} = 0.
        \]
It follows that $|A| = \sqrt{2}m^{(\beta-1)/2} = \sqrt{2}|\Lambda_V|^{(\beta-1)/2}$ and 
        \begin{equation}\label{E:epqrZp}
        \eps_{pqr} \eps_{ZBC} A_{Bq} A_{Cr}  = 
        \begin{cases} 2 m^{\beta - 1} \, E_{Vp} = 2 \lambda_{Vp} |\Lambda_V|^{\beta - 2}  &\text{ if } Z=V, \\ 0 &\text{ otherwise.} 
        \end{cases}
        \end{equation}
        Then we estimate using $\alpha \frac{\beta-1}2 = \beta$ and $\frac{\beta-1}2 + \frac{1}{2-\alpha'} =\frac1{\alpha-2} + \frac{\alpha-1}{\alpha-2} = \frac{\alpha}{\alpha-2} = \beta$ 
        \begin{align*}
            g_H(\lambda, \mu) &\geq -|\mu| |A| + 2 |\Lambda_V|^{\beta} - \frac1{2^{\alpha/2}} |A|^{\alpha} \geq - 2^{1/2} |\mu| |\Lambda_V|^{(\beta-1)/2} + |\Lambda_V|^{\beta} \\
            &\geq -2 |\mu| | \lambda|^{(\beta-1)/2} + \frac1{3^{\beta/2}} |\lambda|^{\beta} 
            \geq \frac1{2 \cdot 3^{\beta/2}} |\lambda|^{\beta}  \geq \frac1{4 \cdot 3^{\beta/2}} \left( |\lambda|^{\beta} + |\mu|^{\alpha'} \right).
        \end{align*}
    \end{enumerate}
\end{proof}

\begin{proposition}\label{prop: coerc higher}
    For $H(A) = \frac1{\alpha \cdot 2^{\alpha/2} \cdot 16 \cdot 3^{\beta}} \left(4 \alpha'\right)^{-\frac{\alpha}{2-\alpha'}} |A|^{\alpha} $ and $A^{(b)} \in W^{1-\frac1{\alpha},\alpha}(\partial \Omega;\R^{3\times3})$ the dual functional $\tilde{S}_H$ is weakly coercive on the Banach space $\mathcal{A}:=\{ \lambda \in L^{\beta}(\Omega; \R^{3\times 3}): \operatorname{curl}(\lambda) \in L^{\alpha'}(\Omega;\R^{3\times 3}) \}$, i.e., for every sequence $(\lambda_k)_k \subseteq \mathcal{A}$ such that $\sup_k \tilde{S}_H[\lambda_k] < \infty$ there exists a (not relabeled) subsequence and $\lambda \in \mathcal{A}$ such that $\lambda_k \rightharpoonup \lambda$ in $L^{\beta}$ and $\operatorname{curl} (\lambda_k)\rightharpoonup \operatorname{curl}(\lambda)$ in $L^{\alpha'}$.
\end{proposition}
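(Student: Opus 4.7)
The plan is to combine the pointwise lower bound on $g_H$ from Proposition \ref{prop: lb higher} with the boundary trace estimate \eqref{eq:higher_trace}, and then absorb linear terms via Young's inequality. The setup invoked before the statement shows that
\[
\tilde{S}_H[\lambda] = \int_{\Omega} g_H(\lambda, \operatorname{curl}\lambda)\, dx - \langle A^{(b)}, \lambda \times n \rangle_{W^{1-\frac1{\alpha},\alpha},\,W^{-1+\frac1{\alpha},\alpha'}},
\]
so the first step would be to apply Proposition \ref{prop: lb higher} pointwise inside the bulk integral to obtain
\[
\int_{\Omega} g_H(\lambda,\operatorname{curl}\lambda)\, dx \;\geq\; c\bigl(\|\operatorname{curl}\lambda\|_{L^{\alpha'}}^{\alpha'} + \|\lambda\|_{L^{\beta}}^{\beta}\bigr).
\]

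The second step is to control the boundary term using \eqref{eq:higher_trace}: for some constant $C' > 0$ depending only on $A^{(b)}$ and $\Omega$,
\[
\Bigl|\langle A^{(b)}, \lambda \times n \rangle\Bigr| \;\leq\; C' \|A^{(b)}\|_{W^{1-\frac1{\alpha},\alpha}} \bigl(\|\lambda\|_{L^{\beta}} + \|\operatorname{curl}\lambda\|_{L^{\alpha'}}\bigr).
\]
Since $\alpha > 2$ forces $\alpha' > 1$ and $\beta > 1$, Young's inequality allows us to absorb this linear bound into the superlinear lower bound: for any $\eta > 0$ there is $C_\eta > 0$ such that $\|\lambda\|_{L^{\beta}} \leq \eta \|\lambda\|_{L^\beta}^\beta + C_\eta$ and $\|\operatorname{curl}\lambda\|_{L^{\alpha'}} \leq \eta \|\operatorname{curl}\lambda\|_{L^{\alpha'}}^{\alpha'} + C_\eta$. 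Choosing $\eta$ small enough (depending on $c$ and $C'\|A^{(b)}\|$) yields
\[
\tilde{S}_H[\lambda] \;\geq\; \tfrac{c}{2}\bigl(\|\operatorname{curl}\lambda\|_{L^{\alpha'}}^{\alpha'} + \|\lambda\|_{L^{\beta}}^{\beta}\bigr) - C''
\]
for some constant $C''$ independent of $\lambda$.

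The third step is a standard weak compactness argument. Given a sequence $(\lambda_k)_k \subset \mathcal{A}$ with $\sup_k \tilde{S}_H[\lambda_k] < \infty$, the inequality above implies uniform bounds on $\|\lambda_k\|_{L^{\beta}}$ and $\|\operatorname{curl}\lambda_k\|_{L^{\alpha'}}$. By reflexivity of $L^{\beta}$ and $L^{\alpha'}$ (both exponents lie in $(1,\infty)$), Banach–Alaoglu yields a subsequence with $\lambda_k \rightharpoonup \lambda$ in $L^{\beta}$ and $\operatorname{curl}\lambda_k \rightharpoonup \eta$ in $L^{\alpha'}$ for some $\lambda \in L^\beta$ and $\eta \in L^{\alpha'}$. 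Because the distributional curl is continuous from $\mathcal{D}'$ to $\mathcal{D}'$, testing against any $\varphi \in C_c^\infty(\Omega)$ shows $\eta = \operatorname{curl}\lambda$ in the sense of distributions, so $\lambda \in \mathcal{A}$ and the two weak convergences hold simultaneously.

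The main obstacle I anticipate is verifying that the pointwise bound in Proposition \ref{prop: lb higher} indeed transfers to a bound on the integral in the form needed, i.e., that a representative $A(x)$ attaining $g_H(\lambda(x),\operatorname{curl}\lambda(x))$ can be chosen measurably so that the supremum in the definition of $\tilde{S}_H$ equals $\int_{\Omega} g_H(\lambda,\operatorname{curl}\lambda)\,dx$; this is a measurable-selection / integral-representation issue that the authors seem to have dealt with via the comment preceding Section 5.1 referring to \cite{AcharyaGinsterSingh}. Apart from that, the proof is essentially a clean bookkeeping of Young's inequality and weak compactness.
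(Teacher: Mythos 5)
Your proposal is correct and follows essentially the same route as the paper: both combine the pointwise lower bound of Proposition \ref{prop: lb higher} with the trace estimate \eqref{eq:higher_trace} to bound the norms $\| \lambda_k \|_{L^{\beta}}$ and $\| \operatorname{curl} \lambda_k \|_{L^{\alpha'}}$ uniformly, then conclude by weak compactness in the reflexive spaces $L^{\beta}$ and $L^{\alpha'}$ and identify the weak limit of the curls distributionally. The only cosmetic difference is that you absorb the linear boundary contribution via Young's inequality, whereas the paper simply notes that the superlinear terms dominate the linear ones as the norms tend to infinity; these are equivalent bookkeeping devices.
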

\begin{proof}
    By Proposition \ref{prop: lb higher} and the trace estimate \eqref{eq:higher_trace} it follows for a sequence $(\lambda_k)_k \subseteq \mathcal{A}$ satisfying $\sup_k \tilde{S}_H[\lambda_k] \leq C < \infty$ for all $k\in \N$ that 
    \[
     c ( \| \lambda_k \|_{L^{\beta}}^{\beta} + \| \operatorname{curl } \lambda_k \|_{L^{\alpha'}}^{\alpha'} ) - C'  \|A^{(b)} \|_{W^{1-\frac1{\alpha},\alpha}} \left( \| \lambda_k \|_{L^{\beta}} + \| \operatorname{curl } \lambda_k \|_{L^{\alpha'}} \right) \leq \tilde{S}_H[\lambda_k] \leq  C.
    \]
    Since $\lim_{\rho, \sigma \to +\infty} c ( \rho^{\beta} + \sigma^{\alpha'} ) - C'  \|A^{(b)} \|_{W^{1-\frac1{\alpha},\alpha}} \left( \rho + \sigma \right) = + \infty$,
 it follows that the terms $\| \lambda_k\|_{L^{\beta}}$ and $ \| \operatorname{curl}(\lambda_k)\|_{L^{\alpha'}}$ are uniformly bounded. By the usual compactness arguments this implies the existence of a (not relabeled) subsequence and functions $\lambda \in L^{\beta}(\Omega;\R^{3\times 3})$ and $\mu \in L^{\alpha'}(\Omega;\R^{3\times 3})$ such that $\lambda_k \rightharpoonup \lambda$ in $L^{\beta}$ and $\operatorname{curl} (\lambda_k)\rightharpoonup \mu$ in $L^{\alpha'}$. Eventually, it can then be checked that actually $\operatorname{curl}(\lambda) = \mu$. 
\end{proof}

Note that by similar arguments one can show directly from Proposition \ref{prop: lb higher} and \eqref{eq:higher_trace} that the dual functional $\tilde{S}_H$ is bounded from below.

\begin{remark}\label{rem: mixed}
    Note that for $\tilde{H}(A) = \ell |A|^{\alpha} + \tilde{\ell} |A|^2$, $\ell, \tilde{\ell} >0$, $\alpha > 2$, the functional $\tilde{S}_{\tilde{H}}$ is also coercive on the space $\{ \lambda \in L^{\beta}(\Omega; \R^{3\times 3}): \operatorname{curl}(\lambda) \in L^{\alpha'}(\Omega;\R^{3\times 3}) \}$. Indeed, since $\ell |A|^{\alpha} + \tilde{\ell} |A|^2 \leq (\ell + \tilde{\ell}) |A|^{\alpha} + \tilde{\ell} =: H(A) + \tilde{\ell}$ it follows that $\tilde{S}_{\tilde{H}} \geq \tilde{S}_{H} - \tilde{\ell} |\Omega|$. Hence, the coercivity of $\tilde{S}_{\tilde{H}}$ follows from the shown coercvity of $\tilde{S}_{H}$. The advantage of using $\tilde{H}$ might be that its Hessian is strictly positive definite which facilitates that the DtP-mapping can at least locally be well-defined in a smooth way.
\end{remark}

\subsection{Quadratic $H$}

In this section we consider the quadratic function $H(A) = \ell |A|^2.$ With the same argument as in the case $\alpha > 2$ it suffices to consider $\ell=\frac12$.

In this setting, the natural space for the primal functions $A$ is $X = L^2(\Omega;\R^{3 \times 3})$. Accordingly, for $(\lambda, \operatorname{curl } \lambda) \in L^{\infty}(\Omega;\R^{3\times 3}) \times L^{2}(\Omega;\R^{3\times 3})$ the integral expression $\int_{\Omega} A:\operatorname{curl} \lambda  +  \lambda_{Zp} \eps_{pqr} \eps_{ZBC} A_{Bq} A_{Cr} - H(A) \, dx$ is well-defined. 
Similarly to the setting in Section \ref{sec:quad_H}, there exists a continuous, linear trace operator from $\{ \lambda \in L^2(\Omega;\R^{3\times 3}): \operatorname{curl } \lambda \in L^2(\Omega;\R^{3\times 3}) \}$ for the term $\lambda \times n$ into the space $W^{-1/2,2}(\partial \Omega;\R^{3\times 3})$ which is the dual of $W^{1/2,2}(\partial \Omega;\R^{3\times 3})$ (the trace space of $W^{1,2}(\Omega;\R^{3\times 3})$). Hence, the boundary term in \eqref{eq: def tildeS} has a well-defined interpretation as the dual pairing $\langle A^{(b)}, \lambda \times n \rangle_{W^{1/2,2},W^{-1/2,2}}$ if $A^{(b)} \in W^{1/2,2}(\partial \Omega;\R^{3\times})$ and it holds 
\begin{equation} \label{eq: boundary trace quadratic}
\left| \langle A^{(b)}, \lambda \times n \rangle_{W^{1/2,2},W^{-1/2,2}} \right| \leq C' \| A^{(b)} \|_{W^{1/2,2}} (\| \lambda\|_{L^{\infty}} + \| \operatorname{curl } \lambda \|_{L^2}).
\end{equation}
Again, in this setting it can be checked that $\tilde{S}_H$ is lower-semicontinuous with respect to $\lambda_k \stackrel{*}{\rightharpoonup} \lambda$ in $L^{\infty}(\Omega;\R^{3\times 3})$ and $\operatorname{curl} \lambda_k \rightharpoonup \lambda$ in $L^2(\Omega;\R^{3\times 3})$.
Hence, in the following we concentrate on showing coercivity of $\tilde{S}_H$ with respect to this convergence. Again, the key will be to show lower bounds for the function $g_{H}$.

\begin{proposition} \label{prop: lb quadratic}
Let $H(A) = \frac12 |A|^2$. Then there exists $c>0$ such that  
    \[
g_H(\lambda, \mu) = +\infty \text{ if } |\lambda| > \frac32 \text{ and }    g_H(\lambda, \mu) \geq c |\mu|^2 \text{ if } |\lambda| \leq \frac32.
    \]
    
\end{proposition}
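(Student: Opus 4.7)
The plan is to analyze the quadratic form
$Q_\lambda(A) := \lambda_{Zp}\eps_{pqr}\eps_{ZBC}A_{Bq}A_{Cr}$
appearing inside the supremum defining $g_H$, together with its symmetric $9\times 9$ Gram matrix $M(\lambda)$ so that $Q_\lambda(A) = A^T M(\lambda) A$. The single technical ingredient I will need in both directions is the universal operator-norm bound $\|M(\lambda)\|_{\mathrm{op}} \leq C_0|\lambda|$, which follows from Cauchy-Schwarz applied to $Q_\lambda(A) = \lambda:K(A)$ together with the estimate $|K(A)| \leq 2|A|^2$ already established in \eqref{eq: est K}; one may take $C_0 = 2$.

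For the first assertion (unboundedness when $|\lambda| > 3/2$) I will reuse verbatim the test configuration from case 3 of the proof of Proposition \ref{prop: lb higher}. The pigeonhole principle applied to $|\lambda|^2 = \sum_V |\Lambda_V|^2$ guarantees that some row $\Lambda_V$ of $\lambda$ satisfies $|\Lambda_V| \geq |\lambda|/\sqrt{3}$, which is strictly greater than $1/2$ whenever $|\lambda| > 3/2$ (in fact whenever $|\lambda| > \sqrt{3}/2$). Picking a positively oriented orthonormal basis $(E_1, E_2, E_3)$ of $\R^3$ with $E_V = \Lambda_V/|\Lambda_V|$, I set $A^{(t)}_{Vi} = 0$ and $A^{(t)}_{Yi} = t\,E_{Y,i}$ for $Y \neq V$. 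The identity \eqref{E:epqrZp} combined with $E_{V+1}\times E_{V+2} = E_V$ then yields $Q_\lambda(A^{(t)}) = 2t^2|\Lambda_V|$ and $|A^{(t)}|^2 = 2t^2$, so the integrand along this family reads $A^{(t)}{:}\,\mu + (2|\Lambda_V|-1)\,t^2$; the coefficient of $t^2$ is strictly positive, and sending $|t| \to \infty$ gives $g_H(\lambda,\mu) = +\infty$.

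For the second assertion I bypass the full Legendre-transform computation (which would otherwise force me to distinguish whether $I - 2M(\lambda)$ is positive definite) and test the supremum directly against $A = \mu/(1+3C_0)$. For $|\lambda| \leq 3/2$ the operator bound yields $\|I - 2M(\lambda)\|_{\mathrm{op}} \leq 1 + 2C_0|\lambda| \leq 1 + 3C_0$, hence $\mu^T(I - 2M(\lambda))\mu \leq (1+3C_0)|\mu|^2$, and a short calculation gives
\[
A{:}\,\mu - \tfrac12 A^T\bigl(I - 2M(\lambda)\bigr)A \;\geq\; \frac{|\mu|^2}{1+3C_0} - \frac{|\mu|^2}{2(1+3C_0)} \;=\; \frac{|\mu|^2}{2(1+3C_0)}.
\]
Setting $c := 1/(2(1+3C_0)) > 0$ this yields $g_H(\lambda,\mu) \geq c|\mu|^2$, which holds automatically also in the degenerate sub-case where the supremum is $+\infty$. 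The hard part, such as it is, lies in the first step: one must have a test configuration whose ratio $Q_\lambda(A)/|A|^2$ is lower-bounded by $|\lambda|/\sqrt{3}$ uniformly in $\lambda$, and the row-pigeonhole plus crossed-frame construction inherited from case 3 of Proposition \ref{prop: lb higher} is designed precisely for this purpose, producing the threshold $3/2$ with comfortable slack.
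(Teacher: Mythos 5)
Your proposal is correct and follows essentially the same route as the paper's proof: the unboundedness for $|\lambda| > \tfrac32$ uses the same crossed-frame test family built from a dominant row $\Lambda_V$ of $\lambda$ (case 3 of Proposition \ref{prop: lb higher}), and the lower bound tests the supremum at $A = \mu/(1+3C_0) = \mu/7$, which is exactly the paper's choice $A = \tfrac17\mu$ and produces the same constant $c = \tfrac1{14}$. Your operator-norm packaging of the estimate \eqref{eq: est K}, and your observation that the quadratic coefficient $2|\Lambda_V|-1$ is already positive once $|\lambda| > \sqrt{3}/2$, are only cosmetic refinements of the same argument.
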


\begin{proof}
First, let $|\lambda| > \frac32$ and consider $A \in \R^{3 \times 3}$ as in case 3 of the proof of Proposition \ref{prop: lb higher}, i.e.~let $\Lambda_V$ be a row of $\lambda$ such that $|\Lambda_V|^2 \geq \frac13 |\lambda|^2$ and $A \in \R^{3\times 3}$ such that $|A| = 2^{1/2} |\Lambda_V|^{1/2}$ and, as in (\ref{E:epqrZp}) with $\beta =2$,  
        \begin{equation*}
        \eps_{pqr} \eps_{ZBC} A_{Bq} A_{Cr}  = 
        \begin{cases} 2 \lambda_{Vp}  &\text{ if } Z=V, \\ 0 &\text{ otherwise.} 
        \end{cases}
        \end{equation*}

Then we obtain for $t>0$ and the matrix $\tilde{A} = tA$
\begin{align*}
    g_H(\lambda, \mu) &\geq -|\mu| |\tilde{A}| + \lambda_{Zp}  \eps_{pqr} \eps_{ZBC} A_{Bq} A_{Cr} - \frac12 |\tilde{A}|^2 \\
    &= - 2^{1/2} t |\mu| |\Lambda_V|^{1/2} + 2 t^2 |\Lambda_V|^2 - t^2 |\Lambda_V| \\
    &\geq -2^{1/2}t |\mu| |\lambda|^{1/2} + \frac23 t^2 |\lambda|^2 - t^2 |\lambda|  \stackrel{t \to \infty}{\longrightarrow} + \infty.
\end{align*}
Next, let $|\lambda| \leq \frac32$. 
Then we obtain using \eqref{eq: est K} for $A = \frac1{7} \mu$
\[
g_H(\lambda, \mu) \geq \mu:A - 2 |A|^2 |\lambda| - \frac12 |A|^2 \geq \mu:A - \frac{7}2 |A|^2 = \frac1{14} |\mu|^2.
\]
\end{proof}

Using this lower bound we show the coercivity of $\tilde{S}_H$.

\begin{proposition}\label{prop: coerc quadratic}
    For $H(A) = \frac12 |A|^2$ and $A^{(b)} \in W^{1/2,2}(\partial \Omega;\R^{3\times3})$ the dual functional $\tilde{S}_H$ is weakly(*) coercive on the Banach space $\{ \lambda \in L^{\infty}(\Omega; \R^{3\times 3}): \operatorname{curl}(\lambda) \in L^{2}(\Omega;\R^{3\times 3}) \}$, i.e. for all sequences $(\lambda_n, \operatorname{curl}(\lambda_n))_n \subseteq L^{\infty}(\Omega; \R^{3\times 3}) \times L^{2}(\Omega;\R^{3\times 3})$ such that $\sup_n \tilde{S}_H[\lambda_n] < \infty$ there exists a (not relabeled) subsequence and $\lambda \in L^{\infty}(\Omega;\R^{3\times 3})$ with $\operatorname{curl}(\lambda) \in L^{2}(\Omega;\R^{3\times 3})$ such that $\lambda_n \stackrel{*}{\rightharpoonup} \lambda$ in $L^{\infty}(\Omega;\R^{3\times 3})$ and $\operatorname{curl }(\lambda_n) \rightharpoonup \operatorname{curl }(\lambda)$ in $L^2(\Omega;\R^{2\times 2})$. 
\end{proposition}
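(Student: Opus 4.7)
The plan mirrors the strategy of Proposition \ref{prop: coerc higher} but exploits the stronger dichotomy in Proposition \ref{prop: lb quadratic}: $g_H(\lambda,\mu) = +\infty$ pointwise whenever $|\lambda|>3/2$. First I would observe that if $\tilde{S}_H[\lambda_n] \leq C <\infty$ for all $n$, then necessarily $|\lambda_n(x)| \leq 3/2$ almost everywhere on $\Omega$, for otherwise the measurable set $\{|\lambda_n|>3/2\}$ would have positive measure and the integral $\int_\Omega g_H(\lambda_n,\operatorname{curl}\lambda_n)\,dx$ would equal $+\infty$ (the boundary term is finite by \eqref{eq: boundary trace quadratic}). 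In particular, the sequence is automatically uniformly bounded in $L^\infty(\Omega;\R^{3\times 3})$, namely $\|\lambda_n\|_{L^\infty} \leq 3/2$.

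Next I would apply the second bound in Proposition \ref{prop: lb quadratic} on the set $\{|\lambda_n|\leq 3/2\} = \Omega$ (up to a null set), giving $g_H(\lambda_n,\operatorname{curl}\lambda_n) \geq c|\operatorname{curl}\lambda_n|^2$ pointwise. Combining with the trace estimate \eqref{eq: boundary trace quadratic},
\begin{equation*}
c \|\operatorname{curl}\lambda_n\|_{L^2}^2 - C' \|A^{(b)}\|_{W^{1/2,2}} \bigl( \|\lambda_n\|_{L^\infty} + \|\operatorname{curl}\lambda_n\|_{L^2} \bigr) \leq \tilde{S}_H[\lambda_n] \leq C.
\end{equation*}
Using $\|\lambda_n\|_{L^\infty} \leq 3/2$, the right-hand side of this inequality is bounded by a constant independent of $n$, and the left-hand side is a coercive quadratic in $\|\operatorname{curl}\lambda_n\|_{L^2}$. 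Hence $\|\operatorname{curl}\lambda_n\|_{L^2}$ is uniformly bounded.

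Having uniform bounds in $L^\infty(\Omega;\R^{3\times 3})$ and on the $L^2$-norm of the curls, I would then extract a subsequence via Banach--Alaoglu so that $\lambda_n \stackrel{*}{\rightharpoonup} \lambda$ in $L^\infty$ and $\operatorname{curl}\lambda_n \rightharpoonup \mu$ in $L^2$ for some $\lambda \in L^\infty(\Omega;\R^{3\times 3})$ and $\mu \in L^2(\Omega;\R^{3\times 3})$. Finally I would identify $\mu = \operatorname{curl}\lambda$ by testing against any $\varphi \in C_c^\infty(\Omega;\R^{3\times 3})$: the pairing $\langle \operatorname{curl}\lambda_n,\varphi\rangle = \langle \lambda_n,\operatorname{curl}\varphi\rangle$ converges under weak* convergence in $L^\infty$ (since $\operatorname{curl}\varphi \in L^1$), and under weak convergence in $L^2$ on the left-hand side, producing $\langle \mu,\varphi\rangle = \langle \lambda,\operatorname{curl}\varphi\rangle$, i.e., $\operatorname{curl}\lambda = \mu$ in the distributional sense. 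No significant obstacle arises in this case: the $L^\infty$ bound is automatic from the infinite-penalty dichotomy, so the argument is actually simpler than its $\alpha>2$ counterpart, and the identification of the weak limit of the curls is a routine distributional computation.
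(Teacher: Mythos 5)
Your proof is correct and follows essentially the same route the paper intends: its proof of this proposition is a one-line reference to the argument of Proposition \ref{prop: coerc higher}, combined with Proposition \ref{prop: lb quadratic} and the trace estimate \eqref{eq: boundary trace quadratic}, which is exactly what you carry out. Your write-up merely makes explicit the step the paper leaves implicit, namely that finiteness of $\tilde{S}_H[\lambda_n]$ together with the infinite-penalty dichotomy (and $g_H \geq 0$) forces $\|\lambda_n\|_{L^\infty} \leq \tfrac32$ almost everywhere, after which the uniform bound on $\|\operatorname{curl}\lambda_n\|_{L^2}$, Banach--Alaoglu, and the distributional identification of the limit curl proceed exactly as in the $\alpha>2$ case.
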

\begin{proof}
    The result follows similarly to Proposition \ref{prop: coerc higher} using Proposition \ref{prop: lb quadratic} and \eqref{eq: boundary trace quadratic}.
\end{proof}

We close this section by briefly commenting on the case of a slightly modified quadratic function $\tilde{H}$ around a base state $\bar{A}$.

\begin{remark}
    Let $\bar{A} \in L^2(\Omega;\R^{3\times 3})$ and $\tilde{H}(x,A) = \frac12 |A - \bar{A}(x)|^2$. Then we can rewrite for every fixed $x \in \Omega$ and $\mu,\lambda \in \R^{3\times 3}$
    \begin{align*}
        g_{\tilde{H}(x,\cdot)}(\lambda, \mu) &:= \sup_{A \in \R^{3\times3}} \mu:A + \lambda_{Zp}  \eps_{pqr} \eps_{ZBC} A_{Bq} A_{Cr} - \tilde{H}(x,A) \\
        &= \sup_{A \in \R^{3\times3}} (\mu + \bar{A}(x)):A + \lambda_{Zp}  \eps_{pqr} \eps_{ZBC} A_{Bq}(x) A_{Cr} (x)   - H(A) - \frac12 |\bar{A}(x)|^2.
    \end{align*}
    Consequently, by the result above, we obtain 
        \[
g_{\tilde{H}(x,\cdot)}(\lambda, \mu) = +\infty \text{ if } |\lambda| > \frac32 \text{ and }    g_{\tilde{H}(x,\cdot)}(\mu,\lambda) \geq c |\mu + \bar{A}(x)|^2 - \frac12 |\bar{A}(x)|^2 \text{ if } |\lambda| \leq \frac32.
    \]
    It is straightforward to check that this estimate suffices to obtain coercivity of the dual functional for a modified quadratic function $\tilde{H}$ including a base state $\bar{A} \in L^2(\Omega;\R^{3\times 3})$. 
\end{remark}

\begin{appendix}

\section{Appendix: Lie-algebra valued forms}

We summarize here some notation and essential algebraic facts about differential forms with values in vector spaces and Lie algebras.
 
Let $A$ be a $k$-form on a manifold $M$, with values in a vector space $V$, and $B$ an $l$-form on $M$ with values in a vector space $W$. Then,
in terms of local coordinates $(x^1,\ldots, x^n)$, we define
\begin{equation}\label{E:AwedgeB]}
    A\wedge B \stackrel{\rm def}{=}  A_{i_1\ldots i_k}\otimes B_{j_1,\ldots, i_l} dx^{i_1}\wedge\ldots \wedge dx^{i_k}\wedge dx^{j_1}\wedge \ldots\wedge dx^{j_l},
\end{equation}
which is a $(k+l)$-form, with values in the vector space $V\otimes W$. (If $V$ and $W$ are equal, both being a vector space of matrices, it is more common to apply the matrix product to $V\otimes V$  and redefine $A\wedge B$ as a matrix-valued form.) This extends to wedge products of more forms in the natural way. If $\Phi:V \to Z$ is a linear mapping, where $Z$ is also a vector space, then, for a $V$-valued $k$-form $A$ as above,  we   define the $Z$-valued $k$-form $\Phi(A)$ by:
\begin{equation}\label{E:defPhiA}
\Phi(A)=\Phi(A_{i_1\ldots i_k}) dx^{i_1}\wedge\ldots \wedge dx^{i_k}.
\end{equation}
A bilinear map
$$\la\cdot,\cdot\ra:V\times V\to\R: (u, v)\mapsto \la u,v\ra$$
 induces a linear map 
 $$\la\cdot\ra: V\otimes V\to\R :u_i\otimes v_i\mapsto \la u_i, v_i\ra.$$
Now let $A$ be a $V$-valued $k$-form and $B$ a $V$-valued $l$-form; then, as in (\ref{E:AwedgeB]}),  $A\wedge B$ is a $V\otimes V$-valued $(k+l)$-form, and so, we can follow the recipe in (\ref{E:defPhiA}) to obtain
\begin{equation}\label{E:{AdotB]}}
    \la A\wedge B\ra_{k+l} \stackrel{\rm def}{=}\la A\wedge B\ra\stackrel{\rm def}{=} \la A_{i_1\ldots i_k},  B_{j_1,\ldots, i_l}\ra dx^{i_1}\wedge\ldots \wedge dx^{i_k}\wedge dx^{j_1}\wedge \ldots\wedge dx^{j_l},
\end{equation}
if $A$ and $B$ are both $V$-valued; note that, here $\la A\wedge B\ra $ is an ordinary $(k+l)$-form, not an inner-product of $A$ and $B$. {\em We will often drop the subscript $k+l$ in $\la\cdot\ra_{k+l}$ when it is clear from the context.}  If the bilinear form $\la\cdot,\cdot\ra$ is symmetric, then
     \begin{equation}\label{E:AwedgeBsym}
     \la A\wedge  B\ra  =(-1)^{kl}\la B\wedge A\ra.
   \end{equation}

Now let $\gog$ be a Lie algebra and 
\begin{equation}
    \gog\otimes\gog\to \gog: v\otimes w\mapsto [v,w]
\end{equation}
the linear map arising from the Lie bracket on $\gog$.  Then for $A$ a $\gog$-valued $k$-form and $B$ a $\gog$-valued $l$-form, we have the $\gog$-valued $(k+l)$-form
\begin{equation}\label{E:{AwedgeB]}}
    [A\wedge B] \stackrel{\rm def}{=} [A_{i_1\ldots i_k}, B_{j_1,\ldots, i_l}] dx^{i_1}\wedge\ldots \wedge dx^{i_k}\wedge dx^{j_1}\wedge \ldots\wedge dx^{j_l}.
\end{equation}
Assume that $\la\cdot,\cdot\ra$ is a symmetric, bilinar form on $\gog$. We assume, further, that the following {\em ad-invariance} relations hold:
\begin{equation}\label{E:uvw{}}
    \la [u,v], w\ra = -\la v, [u,w]\ra,
\end{equation}
for all $u, v,w\in \gog$.  If $\gog$ is a matrix Lie algebra, then the pairing $(u,v)\mapsto {\rm Tr}(uv)$ satisfies this condition, as is seen from (\ref{E:TrXYZ}). Symmetry of $\la\cdot,\cdot\ra$ and the relation (\ref{E:uvw{}}) imply the cyclic relations:
\begin{equation}
   \la [u,v], w\ra =  \la [w,u],v\ra=\la [v,w], u\ra
\end{equation}

\begin{lemma}
   Let $A$ be a $k-$form, $B$ an $m$-form, and $C$ and $n$-form, all defined on a manifold, with values in a Lie algebra $\gog$, which is equipped with a symmetric, bilinear pairing $\la\cdot,\cdot\ra$ on $\gog\times\gog$, satisfying the ad-invariance condition (\ref{E:uvw{}}). Then
   \begin{equation}\label{E:AbrackBskew}
       [A\wedge B]=-(-1)^{km}[B,A]
   \end{equation}
   and 
   \begin{equation}\label{E:ABwC}
       \la A\wedge [B\wedge C]\ra_{k+m+n} =(-1)^{(k+m)n}\la C\wedge [A\wedge B]\ra_{k+m+n}=(-1)^{(n+m)k}\la B\wedge [C\wedge A]\ra_{k+m+n}.
   \end{equation}
   If $\gog$ is a matrix Lie algebra, then, for any $\gog$-valued $1$-form $A$, we have
   \begin{equation}\label{E:AwedgeA}
       [A\wedge A] =2 \Pi(A\wedge A),
   \end{equation}
   where $\Pi:\gog\otimes\gog\to\gog:u\otimes v\mapsto uv$, the matrix product.
\end{lemma}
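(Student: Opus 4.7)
The plan is to work in local coordinates throughout, writing $A = A_I\, dx^I$, $B = B_J\, dx^J$, $C = C_K\, dx^K$ with multi-indices $|I|=k$, $|J|=m$, $|K|=n$. Every claim then reduces to an algebraic identity on $\gog$ applied coefficientwise, together with a sign count from permuting the wedge factors via the standard rule $dx^J\wedge dx^I = (-1)^{km}\, dx^I\wedge dx^J$.

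For (\ref{E:AbrackBskew}), I would expand $[B\wedge A] = [B_J, A_I]\, dx^J\wedge dx^I$, then apply antisymmetry of the Lie bracket to rewrite $[B_J, A_I] = -[A_I, B_J]$, and swap the wedge factors at a cost of $(-1)^{km}$; the two sign contributions combine to yield $[B\wedge A] = -(-1)^{km}[A\wedge B]$, which is equivalent to the claim. For (\ref{E:AwedgeA}), with $A = A_i\, dx^i$ a 1-form, one has $[A\wedge A] = (A_iA_j - A_jA_i)\, dx^i\wedge dx^j$ and $\Pi(A\wedge A) = A_iA_j\, dx^i\wedge dx^j$; relabeling the summation indices and using antisymmetry of the wedge shows $-A_jA_i\, dx^i\wedge dx^j = A_iA_j\, dx^i\wedge dx^j$, so both terms in the bracket collapse to the same expression and the factor $2$ emerges.

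The substantive part is (\ref{E:ABwC}). My strategy is to first invoke the ad-invariance of $\la\cdot,\cdot\ra$ together with the symmetry of the pairing to establish the scalar cyclic identity $\la A_I, [B_J, C_K]\ra = \la C_K, [A_I, B_J]\ra = \la B_J, [C_K, A_I]\ra$ at the level of coefficients in $\gog$; these are precisely the cyclic relations displayed just above the lemma, which fall out of (\ref{E:uvw{}}) together with symmetry of the pairing. Given this, the three integrands in (\ref{E:ABwC}) differ only by a permutation of the wedge factors $dx^I\wedge dx^J\wedge dx^K$. To match $\la C\wedge [A\wedge B]\ra$, one must move the $n$-form $dx^K$ to the front, past the $(k+m)$-form $dx^I\wedge dx^J$, which contributes $(-1)^{(k+m)n}$; to match $\la B\wedge [C\wedge A]\ra$, one moves the $k$-form $dx^I$ to the back, past the $(m+n)$-form $dx^J\wedge dx^K$, contributing $(-1)^{k(m+n)} = (-1)^{(n+m)k}$.

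The only real obstacle I anticipate is bookkeeping the permutation signs in (\ref{E:ABwC}) consistently with the graded-commutativity conventions; once the scalar cyclic identity at the $\gog$ level is fixed, the remainder is mechanical sign arithmetic.
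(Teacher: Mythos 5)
Your proposal is correct and follows essentially the same route as the paper: coordinate expansion per the definitions (\ref{E:AwedgeB]}) and (\ref{E:{AwedgeB]}}), the cyclic relations derived from ad-invariance and symmetry just before the lemma, and block-wise sign counting for the wedge permutations. The paper in fact only writes out the computation for (\ref{E:AwedgeA}) (identically to yours) and declares (\ref{E:AbrackBskew}) and (\ref{E:ABwC}) ``readily verified,'' so your argument simply supplies the omitted details in the intended way.
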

As a rule, for matrix-valued forms, we drop $\Pi$ and simply write  $A\wedge A$ and $A\wedge A\wedge A$. The utility of (\ref{E:ABwC}) is that it allows us to move each of the terms $A$, $B$, $C$ to the left. 
\begin{proof}
    Equations  (\ref{E:AbrackBskew}) and  (\ref{E:ABwC}) are readily verified.
    Next, for any $\gog$-valued $1$-form $A$, we have
    \begin{equation}
    \begin{split}
          [A\wedge A] &=[A_i,A_j]dx^i\wedge dx^j\\
          &= A_iA_jdx^i\wedge dx^j-A_jA_idx^i\wedge dx^j\\
          &= A_iA_jdx^i\wedge dx^j+A_jA_idx^j\wedge dx^i\\
          &=2A_iA_jdx^i\wedge dx^j\\
          &=2A\wedge A,
    \end{split}
    \end{equation}
    where we have dropped $\Pi$.
  \end{proof}

The {\em curvature} form $F^A$, associated to $A$, is defined by
\begin{equation}\label{E:defFA}
    F^A=A+\frac{1}{2}[A\wedge A].
\end{equation}
When $A$ is matrix-valued, we can then write this as
\begin{equation}\label{E:defFAmatr}
    F^A=A+ A\wedge A.
\end{equation}

\subsection{The group $SU(2)$}\label{ss:su2}

Some of our results are worked out for the group $SU(2)$. Here we summarize some basic facts and notation about $SU(2)$ and its Lie algebra $su(2)$.

The matrix group $SU(2)$ is, explicitly, given by
\begin{equation}
    SU(2)=\left\{\left(\begin{matrix} x_1+ix_2 & - x_3+ix_4\\ x_3+ix_4 &  x_1-ix_2\end{matrix}\right) :\, x_1^2+x_2^2+x_3^2+x_4^2=1\right\}. 
\end{equation}
Thus, $SU(2)$ is essentially the unit sphere in $\R^4$, with the identity matrix $I$ corresponding to the point $(1,0,0,0)$. The Lie algebra $su(2)$ is the tangent space $T_ISU(2)$ and is thus
\begin{equation}
    su(2)=T_ISU(2)=\left\{i \left[\begin{matrix}y_3 & y_1 -iy_2\\ y_1+iy_2 & - y_3 \end{matrix}\right]\,: y_1, y_2, y_3\in \R^3\right\}.
\end{equation}
A basis of $su(2)$ is given by $i\sigma_1, i\sigma_2, i\sigma_3$, where we are using the Pauli matrices
\begin{equation}\label{E:Pauli}
\sigma_1=\left(\begin{matrix} 0 & 1 \\ 1& 0\end{matrix}\right),\quad \sigma_2=\left(\begin{matrix} 0 & -i \\ i & 0\end{matrix}\right),
\quad \sigma_3=\left(\begin{matrix} 1& 0\\ 0& -1\end{matrix}\right).
\end{equation}
Now consider the  inner-product on complex matrices given by
\begin{equation}\label{E:ipmatr}
    \la A, B\ra =\frac{1}{N} {\rm Tr}(A^*B)=\frac{1}{N}\sum_{j,k=1}^N A_{jk}^*B_{jk},
\end{equation}
the scaling by $1/N$ is to normalize $\la I, I\ra$ to have the value $1$. The  matrices 
$$E_1=i\sigma_1, \, E_2= i\sigma_2,\, E_3= i\sigma_3$$
form a  basis of $su(2)$. We note also that since each element of $su(2)$ is skew-hermitian, the inner-product (\ref{E:ipmatr}) reduces on $su(2)$ to
\begin{equation}\label{E:ipsu2}
     \la A, B\ra =-\frac{1}{2}{\rm Tr}(AB). 
\end{equation}
There are the standard identities:
\begin{equation}\label{E:EJKid}
    \begin{split}
        E_1^2 = E_2^2&=E_3^2   =-I\\
        E_1E_2=-E_2E_1=-E_3,\,\,&  E_3E_1=-E_1E_3= -E_2,\, E_2E_3=-E_3E_2= -E_1, 
    \end{split}
\end{equation}
which imply the commutation relations:
\begin{equation}\label{E:EJKLcomm}
    [E_1, E_2]=-2E_3,\,\, [E_3, E_1]= -2E_2, \,\, [E_2, E_3]=-2E_1.
\end{equation}
Orthonormality of the basis  $\{E_J\}$  of $su(2)$ arises from the relations:
\begin{equation}\label{E:TrEJK}
    -{\rm Tr}(E_JE_K)= 2\delta_{JK}.
\end{equation}
We have then
\begin{equation}\label{E:TrEJKL}
    -{\rm Tr}\bigl(E_J[E_K, E_L]\bigr)=-{\rm Tr}\bigl(E_L[E_J, E_K]\bigr)=-{\rm Tr}\bigl(E_K[E_L, E_J]\bigr)=4\epsilon_{JKL}.
\end{equation}
In fact, for any  square matrices $X, Y, Z$, all of the same size, we have
\begin{equation}\label{E:TrXYZ}
    \Tr(X[Y,Z])=\Tr(Z[X,Y])=\Tr(Y[Z,X]).
\end{equation}
We note that, by definition of the inner-product (\ref{E:ipsu2}), equation (\ref{E:TrEJKL}) implies:
\begin{equation}\label{E:TripEJKL}
   \la [E_J, E_K], E_L]\ra = 2\epsilon_{JKL}.
\end{equation}

\section{Appendix: Function space calculus}
We summarize some notions and results on Fr\'echet spaces, using the work of Hamilton \cite{Ham1982}.

A {\em Fr\'echet space} is a real or complex vector space $X$, equipped with a topology arising from a sequence of seminorms $\{|\!|\cdot|\!|_n\}_{n\geq 0}$, such that the topology is Hausforff and complete (every Cauchy sequence converges). As an example (\cite[Example 1.1.5]{Ham1982}), for a vector bundle $E$ over a compact Riemannian manifold $M$,   the vector space $C^\infty(M,V)$  of all $C^\infty$ sections $M\to V$ is a Fr\'echet space with respect to the topology induced by  the family of  norms $|\!|\cdot|\!|_n$   given by
$$|\!|f|\!|_n=\sum_{k=0}^n\sup_{p\in M}|D^kf(p)|,$$
where $D^kf$ is the $k$-th covariant derivative of $f$. The larger space $C^k(M,V)$, for finite $k$, is a Banach space with the norm $|\!|\cdot|\!|_k$.

In the sequel, $X$, $Y$, and $Z$ are Fr\'echet spaces.

\begin{definition}\label{D:direcder}
For a function $F:U\to Y$, where $U$ is an open subset of $X$, and both $X$ and $Y$ are Fr\'echet spaces, the {\em directional derivative} $F'(x)h$ or $DF|_xh$ is defined by
\begin{equation}
    F'(x)h=DF|_xh=\lim_{t\to 0}\frac{1}{t}\left[F(x+th)-F(x)\right].
\end{equation}
We will say that $F$ is {\em directionally $C^1$} on $U$ if $F'(x)h$ exists for all $(x,h)\in U\times X$, and $(x,h)\mapsto F'(x)h$ is continuous.  
\end{definition}

According to \cite[Theorem 3.2.5]{Ham1982}, if $F$ is directionally $C^1$ then $h\mapsto F'(x)h$ is linear for all $x\in U$.

By  \cite[Lemma 3.3.1]{Ham1982}, a map $F:U\to Y$, where $U$ is a convex open subset of $X$, is directionally $C^1$ if and only if there is a continuous map $L:U\times U\times X\to Y$, with $h\mapsto L(f_0,f_1)h$ being linear for all $f_0, f_1\in U$  such that
\begin{equation}
    F(f_1)-F(f_0)= L(f_0, f_1)(f_1-f_0).
\end{equation}
If this holds, then
\begin{equation}
    F'(f)h=L(f,f)h 
\end{equation}
for all $f\in U$ and $h\in X$. The proof is obtained by writing $F(f_1)-F(f_0)$  as the integral $\int_0^1\ldots dt$ of the derivative of the function $[0,1]\to Y:t\mapsto F'((1-t)f_0+tf_1)h$. From this follows Hamilton's Fr\'echet space {\em chain rule}\cite[Theorem 3.3.4]{Ham1982}:
\begin{equation}\label{E:HamCR}
    D(G\circ F)|_fh =DG|_{F(f)}DF|_fh
\end{equation}
whenever $F:U\to Y$ and $G:V\to Z $ are composable Fr\'echet $C^1$ functions, where $U$ is an open subset of $X$ and $V$ is an open subset of $Y$m with $f\in U$ and $h\in X$.
\section{Appendix: Coercivity for Quartic $H$}\label{app:quartic}

  Consider the quartic function $H(A) =  \ell |A|^4$ with $\ell=\left(\frac{3}{8}\right)^6 \frac1{4 \cdot 18}$.
 
  In this setting, the natural space for the primal variables $A$ is $X = L^4(\Omega;\R^{3\times 3})$ and, accordingly, $(\lambda,\curl \lambda) \in L^2(\Omega; \R^{3\times 3}) \times L^{4/3}(\Omega;\R^{3\times3})$, where $\operatorname{curl} \lambda$ has to be understood in the sense of distributions. Note that these choices guarantee by H\"older's inequality that the integral 
  $$\int_{\Omega} A:\operatorname{curl} \lambda  +  \lambda_{Zp} \eps_{pqr} \eps_{ZBC} A_{Bq} A_{Cr} - H(A) \, dx$$
  is well-defined. Next, note that for functions  $\lambda \in L^2(\Omega; \R^{3\times 3}) \subseteq L^{4/3}(\Omega; \R^{3\times 3})$ with $\operatorname{curl} \lambda \in L^{4/3}(\Omega; \R^{3\times 3})$ there exists a linear, continuous trace operator for the term $\lambda \times n$ into the space $W^{-3/4,4}(\partial \Omega; \R^{3\times 3})$, the dual of the space $W^{3/4,4}(\partial \Omega; \R^{3\times 3})$ which in turn is the trace space of $W^{1,4}(\Omega,\R^{3\times 3})$, c.f.~see \cite[Theorem 18.40]{LeoniSobolev} and \cite[Theorem 1, page 204]{Lions}. In particular, assuming that $A^{(b)} \in W^{3/4,4}(\partial \Omega;\R^{3\times 3})$ (i.e., $A^{(b)}$ is the trace of a function in $W^{1,4}(\Omega;\R^{3\times 3})$) then the well-defined interpretation of the integral $\int_{\partial \Omega} A^{(b)} : (\lambda \times n) \, da$ is the dual pairing $\langle A^{(b)}, \lambda \times n \rangle_{W^{3/4,4},W^{-3/4,4}}$. This expression is continuous with respect to the weak convergences $\lambda_k \rightharpoonup \lambda$ in $L^4$ and $\operatorname{curl} \lambda_k \rightharpoonup \operatorname{curl} \lambda$ in $L^{4/3}$. Moreover, it holds 
  \begin{equation}\label{eq:quart_trace}
      \left| \langle A^{(b)}, \lambda \times n \rangle_{W^{3/4,4},W^{-3/4,4}} \right| \leq \|A^{(b)} \|_{W^{3/4,4}} \| \lambda \times n\|_{W^{-3/4,4}} \leq C' \|A^{(b)} \|_{W^{3/4,4}} \left( \| \lambda\|_{L^2} + \| \operatorname{curl } \lambda \|_{L^{4/3}} \right).
  \end{equation} 

  It can then be shown that in this setting $\tilde{S}_H$ is lower-semicontinuous with respect to $\lambda_k \rightharpoonup \lambda$ in $L^4$ and $\operatorname{curl} \lambda_k \to \operatorname{curl} \lambda$ in $L^{4/3}$.
  Hence, the main step to prove existence of a minimizer for $\tilde{S}_H$ using the Direct Method of the Calculus of Variations is to prove coercivity of $\tilde{S}_H$.
  In order to show the coercivity of $\tilde{S}_H$ we first prove lower bounds on the function $g_H$. 

  \begin{proposition}\label{prop: lb quartic}
      Let $H(A) = \left(\frac{3}{8}\right)^6 \frac1{4 \cdot 18} |A|^4$. Then there exists $c>0$ such that 
      \[
      g_H(\lambda, \mu) \geq c \left( |\mu|^{4/3} + |\lambda|^2 \right).
      \]
  \end{proposition}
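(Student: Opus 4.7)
The plan is to follow the strategy of Proposition \ref{prop: lb higher} specialized to $\alpha=4$, where $\alpha' = 4/3$ and $\beta = 2$, splitting $(\lambda,\mu)\in \R^{3\times 3}\times\R^{3\times 3}$ into three regimes and choosing a carefully scaled test matrix $A$ in each one. Throughout we use the elementary estimate
\[
    \bigl|\eps_{pqr}\eps_{ZBC} A_{Bq} A_{Cr}\bigr|^2 \leq 4|A|^4
\]
(the bound (\ref{eq: est K}) from the general proof, valid since $\beta=2$), which controls the cubic term in the supremum by $2|\lambda|\,|A|^2$.

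First, in the regime where $|\lambda|$ is small compared to $|\mu|^{2/3}$ (i.e.~$|\lambda| \leq \tfrac{1}{4\alpha'}|\mu|^{2-\alpha'} = \tfrac{3}{16}|\mu|^{2/3}$), I would choose the ``primal-dual'' test $A = \mu|\mu|^{-2/3}$, for which $A:\mu = |\mu|^{4/3}$ and $|A|^4 = |\mu|^{4/3}$. The cubic term is absorbed by half of the primal term since $2|\lambda|\,|A|^2 \leq \tfrac12 |\mu|^{4/3}$, and the quartic penalty with the small constant $\ell = (3/8)^6/(4\cdot 18)$ is a negligible fraction of $|\mu|^{4/3}$. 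This yields $g_H(\lambda,\mu) \gtrsim |\mu|^{4/3}$, and in this regime $|\lambda|^2 \lesssim |\mu|^{4/3}$, giving the full bound.

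Second, in the intermediate regime $\tfrac{1}{4\sqrt{3}}|\lambda|^{3/2} \leq |\mu| \leq (16/3)^{3/2}|\lambda|^{3/2}$ I would take $A = (4|\lambda|)^{-1}\mu$, balancing the linear and the cubic term so that they jointly yield a positive multiple of $|\lambda|^{-1}|\mu|^2$; the upper bound on $|\mu|/|\lambda|^{3/2}$ ensures the quartic penalty is dominated, while the lower bound converts $|\lambda|^{-1}|\mu|^2$ into a positive multiple of $|\lambda|^{\beta} = |\lambda|^2$, and $|\mu|^{4/3}$ is comparable to $|\lambda|^2$ in this regime.

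Third, in the regime $|\mu| \leq \tfrac{1}{4\sqrt{3}}|\lambda|^{3/2}$, where $\lambda$ dominates, I would pick an index $V\in\{1,2,3\}$ so that the $V$-th row $\Lambda_V$ of $\lambda$ satisfies $|\Lambda_V|^2 \geq \tfrac{1}{3}|\lambda|^2$, build an orthonormal basis $(E_1,E_2,E_3)$ of $\R^3$ with $E_V = \Lambda_V/|\Lambda_V|$ and $\tfrac12 \eps_{VYZ}E_Y\times E_Z = E_V$, and take the rank-two test matrix $A_{Yi} = |\Lambda_V|^{1/2} E_{Yi}$ for $Y\neq V$ and $A_{Vi}=0$ as in the general proof. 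The cross-product structure gives $\eps_{pqr}\eps_{ZBC}A_{Bq}A_{Cr} = 2\lambda_{Vp}$ when $Z=V$ and $0$ otherwise, so the cubic term contributes exactly $2|\Lambda_V|^2 \geq \tfrac{2}{3}|\lambda|^2$; the term $A:\mu$ is bounded by $\sqrt{2}|\mu|\,|\lambda|^{1/2} \leq \tfrac{1}{2\sqrt{6}}|\lambda|^2$, and with the small constant in $H$ the quartic penalty $\ell |A|^4 = 4\ell|\Lambda_V|^2$ is negligible. Since here $|\mu|^{4/3} \lesssim |\lambda|^2$, both terms in the claim are controlled.

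The main obstacle is simply bookkeeping of the numerical constants: the constant $\ell = (3/8)^6/(4\cdot 18)$ is tuned so that all three cases close simultaneously with a common positive lower constant $c$, and one has to carry the inequalities $\alpha(\beta-1)/2 = \beta$ and $(2-\alpha')\beta = \alpha'$ (both trivial for $\alpha=4$, $\beta=2$, $\alpha'=4/3$) through each estimate to verify the exponents align. Once the three cases are combined, the announced lower bound $g_H(\lambda,\mu) \geq c(|\mu|^{4/3} + |\lambda|^2)$ follows, and Proposition \ref{prop: coerc higher} can then be invoked verbatim with $\alpha=4$.
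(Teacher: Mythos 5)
Your proposal is correct and takes essentially the same route as the paper's own proof: the identical three-regime split in $(|\lambda|,|\mu|)$, the same test matrices $A=\mu|\mu|^{-2/3}$, $A=(4|\lambda|)^{-1}\mu$, and the rank-two matrix built from the dominant row $\Lambda_V$, and the same key estimate \eqref{eq: est K}. The only differences are immaterial constant choices in the regime thresholds (e.g.\ $\tfrac{3}{16}$ vs.\ $\tfrac38$ in case 1, $\tfrac{1}{4\sqrt{3}}$ vs.\ $\tfrac{1}{12}$ at the case 2/3 boundary), and your numerical bounds still close with the paper's choice of $\ell$.
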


  \begin{proof}
      Fix $\lambda, \mu \in \R^{3\times 3}$.  
      We distinguish the following cases.
      \begin{enumerate}
          \item Assume that $|\lambda| \leq \frac38 |\mu|^{2/3}$. 
          Let $A = \mu |\mu|^{-2/3}$.
          Then using \eqref{eq: est K} it follows
          \begin{align*}
              g_H(\lambda, \mu) &\geq \mu:A - 2 |\lambda| |A|^2 - \frac18|A|^4 = \frac78 |\mu|^{4/3} - 2 |\lambda| |\mu|^{2/3} \geq \frac18 |\mu|^{4/3} \geq \frac1{16} \left( |\mu|^{4/3} + |\lambda|^2 \right). 
          \end{align*}
          \item Assume that $\frac1{12} |\lambda|^{3/2} \leq |\mu| \leq \left( \frac83 \right)^{3/2} |\lambda|^{3/2}$. 
          For $A = (4|\lambda|)^{-1} \mu$ we find
          \begin{align*}
              g_H(\lambda, \mu) &\geq \mu: A - 2 |\lambda| |A|^2 - \left(\frac{3}{8}\right)^6 \frac1{4 \cdot 18} |A|^4 \\
              &= \frac14 |\lambda|^{-1} |\mu|^2 - \frac1{8} |\lambda|^{-1} |\mu|^2 - \left(\frac{3}{8}\right)^6 \frac1{4 \cdot 18 \cdot 256} |\lambda|^{-4} |\mu|^4 \\
              &\geq \frac1{8 \cdot 12^2} |\lambda|^2 - \left(\frac{3}{8}\right)^6 \frac1{4 \cdot 18 \cdot 256} \left( \frac83 \right)^6  |\lambda|^2   \\
              &\geq \frac1{16 \cdot 12^2}|\lambda|^2 \geq \frac1{32 \cdot 12^2} \left(\frac38 \right)^{2} \left( |\lambda|^2 + |\mu|^{4/3} \right).
          \end{align*}
            
           \item Assume that $|\mu| \leq \frac1{12} |\lambda|^{3/2}$. Let $V \in \{1,2,3\}$ such that it holds for the $V^{th}$ row, $\Lambda_V \in \R^3$, of $\lambda \in \R^{3\times 3}$ that $ m^2:= \lambda_{Vi} \lambda_{Vi} \mbox{ (no sum on $V$) } = |\Lambda_V|^2  \geq \frac13 |\lambda|^2$, $m > 0$. 
            Next, let $Y,Z \in \{1,2,3\}$ such that $\{V,Y,Z \} = \{1,2,3\}$. 
          Choose an orthonormal basis $(E_1, E_2, E_3)$ with $E_V = \frac{1}{m}\Lambda_V$ and $\half \eps_{VYZ} E_Y \times E_Z = E_V$.
          Now, define $A \in \R^{3\times 3}$ row-wise as 
          \[
          A_{Yi} =  \sqrt{m} \, E_{Yi} \quad \text{ if } Y \neq V \qquad \text{ and } \qquad A_{Vi} = 0.
          \]
  It follows that $|A| = \sqrt{2m}$ and 
          \begin{equation}\label{E:epqrZp quartic}
          \eps_{pqr} \eps_{ZBC} A_{Bq} A_{Cr}  = 
          \begin{cases} 2 m \, E_{Vp} = 2 \lambda_{Vp}  &\text{ if } Z=V, \\ 0 &\text{ otherwise.} 
          \end{cases}
          \end{equation}
        
          Then we estimate
          \begin{align*}
              g_H(\lambda, \mu) &\geq -|\mu| |A| + 2 |\Lambda_V|^2 - \frac1{4} |A|^4 \geq - 2^{1/2} |\mu| |\Lambda_V|^{1/2} + |\Lambda_V|^2 \\
              &\geq -2 |\mu| | \lambda|^{1/2} + \frac13 |\lambda|^2 
              \geq \frac16 |\lambda|^2  \geq \frac1{12} \left( |\lambda|^2 + |\mu|^{4/3} \right).
          \end{align*}
      \end{enumerate}
  \end{proof}

  \begin{proposition}\label{prop: coerc quartic}
      For $H(A) = \left(\frac{3}{8}\right)^6 \frac1{4 \cdot 18} |A|^4$ and $A^{(b)} \in W^{3/4,4}(\partial \Omega;\R^{3\times3})$ the dual functional $\tilde{S}_H$ is weakly coercive on the Banach space $\mathcal{A}:=\{ \lambda \in L^2(\Omega; \R^{3\times 3}): \operatorname{curl}(\lambda) \in L^{4/3}(\Omega;\R^{3\times 3}) \}$, i.e., for every sequence $(\lambda_k)_k \subseteq \mathcal{A}$ such that $\sup_k \tilde{S}_H[\lambda_k] < \infty$ there exists a (not relabeled) subsequence and $\lambda \in \mathcal{A}$ such that $\lambda_k \rightharpoonup \lambda$ in $L^2$ and $\operatorname{curl} (\lambda_k)\rightharpoonup \operatorname{curl}(\lambda)$ in $L^{4/3}$.
  \end{proposition}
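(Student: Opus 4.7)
The plan is to follow the template of Proposition \ref{prop: coerc higher} specialized to the quartic exponents $\alpha=4$, $\alpha'=4/3$, $\beta=2$. The only two new ingredients needed are the pointwise lower bound on $g_H$ supplied by Proposition \ref{prop: lb quartic} and the boundary trace inequality \eqref{eq:quart_trace}; everything else is a routine application of the Direct Method of the Calculus of Variations, so I do not expect any genuine obstacle beyond bookkeeping.

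First, I would integrate the pointwise estimate $g_H(\lambda(x),\operatorname{curl}\lambda(x)) \geq c\bigl(|\operatorname{curl}\lambda(x)|^{4/3} + |\lambda(x)|^2\bigr)$ over $\Omega$ and combine it with \eqref{eq:quart_trace} to obtain, for any $\lambda \in \mathcal{A}$,
\begin{equation*}
\tilde{S}_H[\lambda] \;\geq\; c\left( \|\lambda\|_{L^2}^2 + \|\operatorname{curl}\lambda\|_{L^{4/3}}^{4/3} \right) \;-\; C'\,\|A^{(b)}\|_{W^{3/4,4}}\left( \|\lambda\|_{L^2} + \|\operatorname{curl}\lambda\|_{L^{4/3}} \right).
\end{equation*}
Because $\lim_{\rho,\sigma\to +\infty} \bigl[\,c(\rho^2 + \sigma^{4/3}) - C'\|A^{(b)}\|_{W^{3/4,4}}(\rho+\sigma)\,\bigr] = +\infty$, every sequence $(\lambda_k)_k \subseteq \mathcal{A}$ with $\sup_k \tilde{S}_H[\lambda_k] < \infty$ must satisfy $\sup_k \bigl(\|\lambda_k\|_{L^2} + \|\operatorname{curl}\lambda_k\|_{L^{4/3}}\bigr) < \infty$.

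Second, by reflexivity of $L^2(\Omega;\R^{3\times 3})$ and $L^{4/3}(\Omega;\R^{3\times 3})$ I extract a (not relabeled) subsequence along which $\lambda_k \rightharpoonup \lambda$ in $L^2$ and $\operatorname{curl}\lambda_k \rightharpoonup \mu$ in $L^{4/3}$. The only remaining point is to identify $\mu = \operatorname{curl}\lambda$, which is a standard distributional step: for every test matrix field $\varphi \in C^{\infty}_c(\Omega;\R^{3\times 3})$ one passes to the limit in the identity
\begin{equation*}
\int_{\Omega} (\operatorname{curl}\lambda_k) : \varphi \,dx \;=\; \int_{\Omega} \lambda_k : \operatorname{curl}\varphi\,dx,
\end{equation*}
using the $L^2$-weak convergence on the right-hand side and the $L^{4/3}$-weak convergence on the left, thereby obtaining $\lambda \in \mathcal{A}$ with $\operatorname{curl}\lambda = \mu$, which is exactly the conclusion claimed.

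The main obstacle is already disposed of by Proposition \ref{prop: lb quartic}; the present proposition is essentially a verbatim transcription of the proof of Proposition \ref{prop: coerc higher} with $(\alpha,\alpha',\beta)=(4,4/3,2)$, and its brevity reflects that the quartic case is exactly the gentle illustrative instance of the general higher-$\alpha$ coercivity argument already established.
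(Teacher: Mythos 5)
Your proposal is correct and follows essentially the same route as the paper's own proof: integrate the pointwise lower bound of Proposition \ref{prop: lb quartic}, control the boundary term via \eqref{eq:quart_trace}, deduce uniform bounds from the superlinear growth of $c(\rho^2+\sigma^{4/3})-C'\|A^{(b)}\|_{W^{3/4,4}}(\rho+\sigma)$, extract weak limits by reflexivity, and identify $\operatorname{curl}\lambda=\mu$ distributionally. The only difference is that you spell out the test-function argument for the last identification, which the paper leaves as ``it can then be checked''; this is a welcome but inessential elaboration.
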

  \begin{proof}
      By Proposition \ref{prop: lb quartic} and the trace estimate discussed above Proposition \ref{prop: lb quartic} for a sequence $(\lambda_k)_k \subseteq \mathcal{A}$ satisfying $\sup_k \tilde{S}_H[\lambda_k] \leq C < \infty$ it follows for all $k\in \N$ that 
      \[
       c ( \| \lambda_k \|_{L^2}^2 + \| \operatorname{curl } \lambda_k \|_{L^{4/3}}^{4/3} ) - C'  \|A^{(b)} \|_{W^{3/4,4}} \left( \| \lambda_k \|_{L^2} + \| \operatorname{curl } \lambda_k \|_{L^{4/3}} \right) \leq \tilde{S}_H[\lambda_k] \leq  C.
      \]
      Since $\lim_{\alpha, \beta \to +\infty} c ( \alpha^2 + \beta^{4/3} ) - C'  \|A^{(b)} \|_{W^{3/4,4}} \left( \alpha + \beta \right) = + \infty$,
   it follows that the terms $\| \lambda_k\|_{L^2}$ and $ \| \operatorname{curl}(\lambda_k)\|_{L^{4/3}}$ are uniformly bounded. By the usual compactness arguments this implies the existence of a (not relabeled) subsequence and functions $\lambda \in L^2(\Omega;\R^{3\times 3})$ and $\mu \in L^{4/3}(\Omega;\R^{3\times 3})$ such that $\lambda_n \rightharpoonup \lambda$ in $L^2$ and $\operatorname{curl} (\lambda_n)\rightharpoonup \mu$ in $L^{4/3}$. Eventually, it can then be checked that actually $\operatorname{curl}(\lambda) = \mu$. 
  \end{proof}

  Note that by similar arguments one can show directly from Proposition \ref{prop: lb quartic} and \eqref{eq:quart_trace} that the dual functional $\tilde{S}_H$ is bounded from below.
  
\end{appendix}
\section*{Acknowledgments}
The work of AA and ANS was supported by the Simons Pivot Fellowship grant \# 983171.

\printbibliography

\end{document}